\let\arxiv=\relax
\title{A Subquadratic Algorithm for 3XOR}
\author{Martin Dietzfelbinger}{Technische Universität Ilmenau, Germany}{martin.dietzfelbinger@tu-ilmenau.de}{https://orcid.org/0000-0001-5484-3474}{}%
\author{Philipp Schlag}{Technische Universität Ilmenau, Germany}{philipp.schlag@tu-ilmenau.de}{https://orcid.org/0000-0001-5052-9330}{}
\author{Stefan Walzer}{Technische Universität Ilmenau, Germany}{stefan.walzer@tu-ilmenau.de}{https://orcid.org/0000-0002-6477-0106}{}
\authorrunning{M. Dietzfelbinger, P. Schlag and S. Walzer}
\subjclass{ Theory of computation $\rightarrow$ Design and analysis of algorithms}%
\keywords{3SUM, 3XOR, Randomized Algorithms, Reductions, Conditional Lower Time Bounds}
\g@addto@macro\bfseries{\boldmath}
\newcommand{\red}[1]{#1}
\theoremstyle{plain}
\newtheorem{mylemma}[theorem]{Lemma}
\crefname{mylemma}{Lemma}{Lemmata}
\newtheorem{mycorollary}[theorem]{Corollary}
\crefname{mycorollary}{Corollary}{Corollaries}
\theoremstyle{plain}
\newtheorem{problem}[theorem]{Problem}
\crefname{problem}{Problem}{Problems}
\crefname{algorithm}{Algorithm}{Algorithms}
\crefname{algocf}{Algorithm}{Algorithms}
\theoremstyle{remark}
\newtheorem{claim}[theorem]{Claim}
\crefname{claim}{Claim}{Claims}
\newcommand{\bucket}[2]{#1_{#2}}
\newcommand{\wpa}[2]{#1^{*}_{#2}}
\newcommand{\wpax}[3]{#1^{*, #3}_{#2}}
\newcommand{\good}[1]{#1^{\mathrm{g}}}
\newcommand{\bad}[1]{#1^{\mathrm{b}}}
\newcommand{\wordRAM}{word~RAM}
\newcommand{\WordRAM}{Word~RAM}
\newcommand{\threeSUM}{3SUM\xspace}
\newcommand{\intthreeSUM}{int3SUM\xspace}
\newcommand{\threeXOR}{3XOR\xspace}
\newcommand{\lcp}{\mathrm{lcp}}
\newcommand{\bigO}{O}
\newcommand{\hashFamily}{\mathcal{H}}
\newcommand{\SetDisjointness}{\textsf{SetDisjointness}}
\newcommand{\SetIntersection}{\textsf{SetIntersection}}
\newcommand{\offSetDisjointness}{offline \SetDisjointness}
\newcommand{\offSetIntersection}{offline \SetIntersection}
\newcommand{\OffSetDisjointness}{Offline \SetDisjointness}
\newcommand{\OffSetIntersection}{Offline \SetIntersection}
\newcommand{\floor}[1]{\left\lfloor #1\right\rfloor}
\newcommand{\ceil}[1]{\left\lceil #1\right\rceil}
\newcommand{\abs}[1]{\lvert #1\rvert}
\newcommand{\card}[1]{\lvert #1\rvert}
\newcommand{\tree}[1]{T_{#1}}
\newcommand{\logrange}{\mu}
\newcommand{\logdom}{\ell}
\newcommand{\suchthat}{s.t.}
\newcommand{\wlogMy}{w.l.o.g.}
\newcommand{\Wlog}{W.l.o.g.}
\newcommand{\eg}{e.g.}
\newcommand{\ie}{i.e.}
\newcommand{\ignore}[1]{}
\newcommand{\condSet}[2]{\{\begin{gathered}\,#1\,\end{gathered}\mid\begin{gathered}\,#2\,\end{gathered}\}}
\newcommand{\prob}[2]{\mathrm{\mathbf{Pr}}_{#1}[\begin{gathered}\,#2\,\end{gathered}]}
\newcommand{\condProb}[3]{\mathrm{\mathbf{Pr}}_{#1}[\begin{gathered}\,#2\,\end{gathered}\mid\begin{gathered}\,#3\,\end{gathered}]}
\newcommand{\E}[2]{\mathrm{\mathbf{E}}_{#1}[\,#2\,]}
\DeclareMathOperator{\AND}{\textsc{and}}
\DeclareMathOperator{\OR}{\textsc{or}}
\DeclareMathOperator{\NOT}{\textsc{not}}
\DeclareMathOperator{\XOR}{\textsc{xor}}
\DeclareMathOperator{\xor}{\oplus}
\DeclareMathOperator{\PARITY}{\textsc{Parity}}
\newcolumntype{C}[1]{>{\centering\let\newline\\\arraybackslash\hspace{0pt}}m{#1}}
\newcommand{\tikzmark}[1]{\tikz[overlay,remember picture] \node (#1) {};}
\newcommand*{\AddNote}[4]{%
    \begin{tikzpicture}[overlay, remember picture]
        \draw [decoration={brace,amplitude=0.5em},decorate,ultra thick]
            ($(#3)!(#1.north)!($(#3)-(0,1)$)$) --  
            ($(#3)!(#2.south)!($(#3)-(0,1)$)$)
                node [align=center, text width=2.5cm, pos=0.5, anchor=west] {#4};
    \end{tikzpicture}
}%
  \let\@evenfoot\@empty
  \let\@oddfoot\@empty
  \def\@evenhead{\large\sffamily\bfseries
          \llap{\hbox to0.5\oddsidemargin{\thepage\hss}}\leftmark\hfil}%
  \def\@oddhead{\large\sffamily\bfseries\rightmark\hfil
         \rlap{\hbox to0.5\oddsidemargin{\hss\thepage}}}%
\begin{document}

\maketitle
\begin{abstract}
  Given a set $X$ of $n$ binary words of equal length $w$, the \threeXOR{} problem 
	asks for three elements $a, b, c\in X$ such that $a\xor b=c$, where $\xor$ denotes the bitwise XOR operation.
  The problem can be easily solved on a word RAM with word length $w$ in time $\bigO(n^2 \log{n})$. 
	Using Han's fast integer sorting algorithm (2002/2004) this can be reduced to $\bigO(n^2 \log{\log{n}})$.
  With randomization or a sophisticated deterministic dictionary construction, 
	creating a hash table for $X$ with constant lookup time 
	leads to an algorithm with (expected) running time $\bigO(n^2)$.
  At present, seemingly no faster algorithms are known.
  
  We present a surprisingly simple deterministic, quadratic time algorithm for \threeXOR.
  Its core is a version of the Patricia trie for $X$, which makes it possible to traverse the set $a\xor X$ 
	in ascending order for arbitrary $a\in \{0, 1\}^{w}$ in linear time.
  Furthermore, we describe a randomized algorithm for \threeXOR{} with expected running time 
	$\bigO(n^2\cdot\min\{\frac{\log^3{w}}{w}, \frac{(\log\log{n})^2}{\log^2 n}\})$.
  The algorithm transfers techniques to our setting that were used by Baran, Demaine, and P{\u{a}}tra{\c{s}}cu (2005/2008) 
	for solving the related \intthreeSUM{} problem (the same problem with integer addition in
	place of binary XOR) in expected time $o(n^2)$. 
	As suggested by Jafargholi and Viola (2016), linear hash functions are employed.
  
  The latter authors also showed that assuming \threeXOR{} needs expected running time $n^{2-o(1)}$
	one can prove conditional lower bounds for triangle enumeration just as with \threeSUM.
  We demonstrate that \threeXOR{} can be reduced to other problems as well, treating 	
  the examples \offSetDisjointness{} and \offSetIntersection, which were 
	studied for \threeSUM{} by Kopelowitz, Pettie, and Porat (2016).
\end{abstract}

\section{Introduction}
\label{sec:intro}

The \threeXOR{} problem~\cite{JV16} is the following: Given a set $X$
of $n$ binary strings of equal length $w$, are there 
elements $a, b, c\in X$ such that  $a\xor b = c$,
where $\xor$ is bitwise XOR?
We work with the word RAM~\cite{FW93} model with word length $w=\Omega(\log n)$,
and we assume as usual that one input string fits into one word. 
Then, using sorting, the problem can easily be solved in time $\bigO(n^2\log{n})$.
Using Han's fast integer sorting algorithm~\cite{Han04} the time can be reduced to
$\bigO(n^2\log\log{n})$.
In order to achieve quadratic running time, 
one could utilize a randomized dictionary for $X$ with expected linear 
construction time and constant lookup time (like in~\cite{FKS84}) 
or (weakly non-uniform, quite complicated) deterministic static dictionaries with
construction time $\bigO(n\log n)$ and constant lookup time as provided in~\cite{HMP01}. 
Once such a dictionary is available, one just has to check whether $a \xor b \in X$, for all pairs $a,b \in X$.
No subquadratic algorithms seem to be known.

It is natural to compare the situation with that for the \threeSUM{} problem, 
which is as follows:%
\footnote{There are many different, but equivalent versions of \threeSUM{} and \threeXOR{},
differing in the way the input elements are grouped. Often one sees the demand that the 
three elements $a$, $b$, and $c$ with  $a\xor b=c$ or $a+b=c$, resp., come from different sets.}
Given a set $X$ of $n$ real numbers, are there 
$a, b, c\in X$ such that $a + b = c$? 
There is a very simple quadratic time algorithm for this problem (see Section~\ref{sec:quadraticDeterministic} below). 
After a randomized subquadratic algorithm was suggested by Gr{\o}nlund J{\o}rgensen and Pettie~\cite{JP14},
improvements ensued~\cite{F17,GS15}, and recently
Chan~\cite{Ch18} gave the fastest deterministic algorithm known,
with a running time of $n^2(\log\log n)^{\bigO(1)}/\log^2 n$.
The restricted version where the input consists of integers whose bit length does not
exceed the word length $w$ is called \intthreeSUM{}. 
The currently best randomized algorithm for \intthreeSUM{} was given by 
Baran, Demaine, and P{\v{a}}tra{\c{s}}cu~\cite{BDP08,BDP05};
it runs in expected time $\bigO(n^{2}\cdot\min\{\frac{\log^{2}{w}}{w}, 
\frac{(\log{\log{n}})^2}{\log^2{n}}\})$ for $w=\bigO(n\log{n})$. 
The \threeSUM{} problem has received a lot of attention in recent years, 
because it can be used as a basis for conditional lower time bounds
for problems \eg{} from computational geometry 
and data structures~\cite{GO95,KPP16,P10}.
Because of this property, \threeSUM{} is in the center of attention 
of papers dealing with low-level complexity. 
Chan and Lewenstein~\cite{CL15} give upper bounds for inputs with a certain structure. 
Kane, Lovett, and Moran~\cite{KLM17} prove near-optimal upper bounds for linear decision trees. 
Wang~\cite{W14} considers randomized algorithms for subset sum, trying to minimize the space, and
Lincoln et al.~\cite{LWWW16} investigate time-space tradeoffs in deterministic algorithms for $k$-SUM.

In contrast, \threeXOR{} received relatively little attention, before 
Jafargholi and Viola~\cite{JV16} studied \threeXOR{} and
described techniques for reducing this problem to triangle enumeration. 
In this way they obtained conditional lower bounds in a way similar to 
the conditional lower bounds based on int\threeSUM{}. 

The main results of this paper are the following:
We present a surprisingly simple deterministic algorithm for \threeXOR{} that runs in time $\bigO(n^2)$.
When $X$ is given in sorted order,
it constructs in linear time a version of the Patricia trie~\cite{Mor68} for $X$, using only word operations
and not looking at single bits at all. This tree then makes it possible to 
traverse the set $a\xor X$ in ascending order in linear time, for arbitrary $a\in\{0,1\}^w$.
This is sufficient for achieving running time $\bigO(n^2)$.
The second result is a randomized algorithm for \threeXOR{} 
that runs in time $\bigO(n^{2}\cdot\min\{\frac{\log^{3}{w}}{w}, 
\frac{(\log{\log{n}})^2}{\log^2{n}}\})$ for $w=\bigO(n\log{n})$, 
which is almost the same bound as that of~\cite{BDP08} for \intthreeSUM{}.
Finding a deterministic algorithm for \threeXOR{} with subquadratic running time remains an open problem.
Finally, we reduce \threeXOR{} to \offSetDisjointness{} and \offSetIntersection, establishing conditional lower bounds (as in~\cite{KPP16} conditioned on the \intthreeSUM{} conjecture).

Unfortunately, no (non-trivial) relation between the required (expected) time for \threeSUM{} and \threeXOR{} is known.
In particular, we cannot exclude the case that one of these problems can be solved 
in (expected) time $\bigO(n^{2-\varepsilon})$ for some constant $\varepsilon>0$
whereas the other one requires (expected) time $n^{2-o(1)}$.
Actually, this possibility is the background of some conditional statements 
on the cost of listing triangles in graphs in~\cite[Cor. 2]{JV16}.
However, due to the similarity of \threeXOR{} to \threeSUM{},
the question arises
whether the recent results on \threeSUM{} can be transferred to \threeXOR{}.

In \Cref{sec:preliminaries}, we review the \wordRAM{} 
model and examine $1$-universal classes of linear hash functions.
In particular, we determine the evaluation cost of such hash functions 
and we restate a hashing lemma \cite{BDP08} on the expected number of elements in ``overfull'' buckets.
Furthermore, we state how fast one can solve the set intersection problem on word-packed arrays
(with details given in the appendix).
In \Cref{sec:quadraticDeterministic}, we construct a special enhanced binary search tree $\tree{X}$ to represent a set $X$
of binary strings of fixed length. 
This representation makes it possible to traverse the set $a\xor X$ 
in ascending order for any $a\in \{0, 1\}^{w}$ in linear time, 
which leads to a simple deterministic algorithm for \threeXOR{} that runs in time $\bigO(n^2)$.
Then, we turn to randomized algorithms and 
show how to solve \threeXOR{} in subquadratic expected time in \Cref{sec:subquadraticRandomized}:
$\bigO(n^{2}\cdot\min\{\frac{\log^{3}{w}}{w}, 
\frac{(\log{\log{n}})^2}{\log^2{n}}\})$ for $w=\bigO(n\log{n})$, 
and $\bigO(n\log^{2}{n})$ for $n\log n\leq w=\bigO(2^{n\log n})$.
Our approach uses the ideas of the subquadratic expected time algorithm for \intthreeSUM{} 
presented in \cite{BDP08}, \ie, computing buckets and fingerprints, word packing, 
exploiting word-level parallelism, and using lookup tables.
Altogether, we get the same expected running time for $w=\bigO(\log^2{n})$ 
and a word-length-dependent upper bound on the expected running time for $w=\omega(\log^2{n})$ that
is worse by a $\log w$ factor in comparison to the \intthreeSUM{} setting.
Based on these results and the similarity of \threeXOR{} to \threeSUM{}, 
it seems natural to conjecture that \threeXOR{} requires expected time $n^{2-o(1)}$, too,
and so \threeXOR{} is a candidate for reductions to other computational problems just as \threeSUM.
In \Cref{sec:condLowerBounds}, we describe 
how to reduce \threeXOR{} to \offSetDisjointness{} and \offSetIntersection{}, 
transferring the results of \cite{KPP16} from \threeSUM{} to \threeXOR.

Recently, Bouillaguet et al.~\cite{BDF18} studied algorithms ``for the 3XOR problem''.
This is related to our setting, but not identical. 
These authors study a variant of the ``generalized birthday problem'', well known in cryptography
as a problem to which some attacks on cryptosystems can be reduced, see~\cite{BDF18}. 
Translated into our notation, their question is:
Given a \emph{random} set $X\subseteq \{0,1\}^w$ of size $3\cdot2^{w/3}$, find, if possible,
three different strings $a,b,c\in X$ such that $a\xor b=c$.
Adapting the algorithm from~\cite{BDP08}, these authors achieve a running time of 
$\bigO(2^{2w/3}(\log^2 w)/w^2)$, which corresponds to the running time 
of our algorithm for $n=3 \cdot 2^{w/3}$.
The difference to our situation is that their input is random.
This means that the issue of 1-universal families of linear hash functions disappears
(a projection of the elements in $X$ on some bit positions does the job) and that 
complications from weak randomness are absent
(e.g., one can use projection into relatively small buckets and use 
Chernoff bounds to prove that the load is very even with high probability).
This means that the algorithm described in~\cite{BDF18} 
does not solve our version of the \threeXOR{} problem.


\section{Preliminaries}
\label{sec:preliminaries}

\subsection{The \WordRAM{} Model}
\label{subsec:wordRAM}

As is common in the context of fast algorithms for the \intthreeSUM{} problem~\cite{BDP08}, 
we base our discussion on the \emph{\wordRAM{} model}~\cite{FW93}. 
This is characterized by a word length $w$.
Each memory cell can store $w$ bits, 
interpreted as a bit string or an integer or a packed sequence of subwords, as is convenient.
The word length $w$ is assumed to be at least $\log n$ and at least
the bit length of a component of the input.
It is assumed that the operations of the \emph{multiplicative instruction set}, \ie{}, 
arithmetic operations (addition, subtraction, multiplication), 
word operations (left shift, right shift), 
bitwise Boolean operations ($\AND$, $\OR$, $\NOT$, $\XOR$), 
and random memory accesses can be executed in constant time.
We will write $\xor$ to denote the bitwise $\XOR$ operation.
A \emph{randomized} word RAM also provides an operation that 
in constant time generates a
uniformly random value in $\{0,1,\dots,v-1\}$ for any given $v\leq 2^w$.

\ignore{\subsection{The \threeXOR{} Problem}
\label{subsec:threeXOR}
Given a set $X$ of $n$ binary words,
the \threeXOR{} problem is to determine whether there are 
$a, b, c\in X$ such that $a\xor b = c$.
\begin{problem}[\threeXOR]
  \label{prob:threeXOR}
  \textbf{Input:}
  Set $X\subseteq\{0, 1\}^{w}$ with $\card{X}=n$.
  
  \noindent\textbf{Question:}
  Is there a triple $(a, b, c)\in X^3$ \suchthat{} $a\xor b=c$?
\end{problem}
By interpreting $w$-bit words as integers and replacing the $\xor$ operation 
by standard addition we get\todo{a version of}{} the \threeSUM{} problem as studied in~\cite{BDP08}.
If randomization is allowed, universe reduction techniques 
as discussed in \Cref{subsec:UniverseReduction} shorten
the input strings to $O(\log n)$ bits by relatively cheap operations.
As linear-time sorting is possible for strings of this length,
it is easy to construct a quadratic-time randomized algorithm for \threeXOR{}.
}

\subsection{Linear Hash Functions}
\label{subsec:LinearUniversalHashing}
We consider hash functions $h \colon U \to M$,
where the domain (``universe'') $U$ is $\{0,1\}^\logdom$ and the range $M$ 
is $\{0,1\}^\logrange$ with $\logrange\le \logdom$. 
Both universe and range are vector spaces over $\mathbb{Z}_2$.
In~\cite{BDP08} and in successor papers on \intthreeSUM{} ``almost linear'' hash functions
based on integer multiplication and truncation were used, as can be found in~\cite{D96}.
As noted in ~\cite{JV16}, in the \threeXOR{} setting the situation is much simpler. 
We may use $\hashFamily^{\text{lin}}_{\logdom,\logrange}$, 
the set of all $\mathbb{Z}_2$-linear functions from $U$ to $M$.
A function $h_A$ from this family is described by a $\logrange\times \logdom$ matrix $A$,
and given by $h_A(x) = A \cdot x$, where $x = (x_0,\dots,x_{\logdom-1})^{\textsf{T}}\in U$ 
and $h_A(x)\in M$ are written as column vectors.
For all hash functions $h\in\hashFamily^{\text{lin}}_{\logdom,\logrange}$ and all $x,y\in U$ we have 
$h(x\xor y) = h(x) \xor h(y)$, by the very definition of \emph{linearity}.
Further, this family is $1$\emph{-universal},
indeed, we have $\prob{A\in\{0,1\}^{\logrange\times\logdom}}{h_A(x)=h_A(y)} = 
\prob{A\in\{0,1\}^{\logrange\times\logdom}}{h_A(x\xor y)=0} = 2^{-\logrange} = 1/\card{M}$, 
for all pairs $x,y$ of different keys in $U$.
We remark that the convolution class described in~\cite{MNT93}, 
a subfamily of $\hashFamily^{\text{lin}}_{\logdom,\logrange}$,
can be used as well, as it is also 1-universal, and needs only $\logdom+\logrange-1$ random bits.  
\ignore{%
A disadvantage of class $\hashFamily^{\text{lin}}_{\logdom,\logrange}$ 
is that sampling a hash function from
it at random requires $\logdom \logrange$ random bits. 
We mention that by employing convolution one gets a smaller class,
$\hashFamily^{\text{conv}}_{\logdom,\logrange}$, also consisting of linear functions.
A function $h_{a}$ from this class is determined by a vector $a\in\{0, 1\}^{\logdom+\logrange-1}$.
From $a$ we form the matrix $A_a=(a_{i+j})_{0\le i < \logrange, 0\le j < \logdom}$, and 
$h_a$ is simply $h_{A_a}$, which means $h_a(x) = A_a \cdot x$ for $x\in U$. 
In~\cite{MNT93} it was shown that 
$\hashFamily^{\text{conv}}_{\logdom,\logrange}$ 
is 1-universal, \ie, $\prob{a\in\{0,1\}^{\logdom+\logrange-1}}{h_a(x)=h_a(y)} = 2^{-\logrange}$.}

The universe we consider here is $\{0, 1\}^{w}$.
The time for evaluating a hash function $h\in\hashFamily^{\text{lin}}_{w,\logrange}$ 
on one or on several inputs depends on the instruction set and on the way $h=h_A$ is stored.
In contrast to the \intthreeSUM{} setting \cite{BDP08}, we are not able to calculate hash values in constant time.

\begin{mylemma}\label{lem:evalTimeHashing} For \emph{$h\in\hashFamily^{\text{lin}}_{w,\logrange}$} 
and inputs from $\{0,1\}^w$ we have:\\
\emph{(a)} $h(x)$ can be calculated in time $\bigO(\logrange)$, if
$\PARITY$ of $w$-bit words is a constant time operation.\\
\emph{(b)} $h(x)$ can always be calculated in time $\bigO(\logrange + \log w)$.\\
\emph{(c)} $h(x_1),\dots,h(x_n)$ can be evaluated in time $\bigO(n\logrange + \log w)$.
\end{mylemma}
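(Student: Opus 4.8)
The plan is to prove the three parts in increasing order of difficulty, with (b) containing the one real idea and (c) being a reuse of it. \emph{Part (a).} Store $A$ row by row, so that the $i$-th row $A_i\in\{0,1\}^w$ occupies a single word. Since the $i$-th bit of $h_A(x)=Ax$ is $\langle A_i,x\rangle$ over $\mathbb{Z}_2$, and this inner product is exactly $\PARITY(A_i\AND x)$, I would compute $h_A(x)$ by one loop over $i=0,\dots,\logrange-1$: form $A_i\AND x$, take its $\PARITY$, shift the resulting bit into position $i$, and $\OR$ it into an accumulator. Each iteration uses $\bigO(1)$ word operations, one of which is $\PARITY$, for a total of $\bigO(\logrange)$.

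\emph{Part (b).} Now $\PARITY$ is not available as a primitive, but a single $\PARITY(y)$ for $y\in\{0,1\}^w$ can be obtained by the standard ``XOR-halving'' routine: repeatedly replace $y$ by $y\xor(y\gg 2^k)$ for $k=\lceil\log w\rceil-1,\dots,0$; afterwards bit $0$ of $y$ equals the parity of the original $y$, and this took $\bigO(\log w)$ steps. Used naively inside (a) this costs $\bigO(\logrange\log w)$, a $\log w$ factor too much, so I would instead reduce all $\logrange$ words $A_0\AND x,\dots,A_{\logrange-1}\AND x$ to their parities \emph{simultaneously}. Maintain a list of words, each of which currently packs several equal-width fields, with the invariant that the parity of field $f$ of word $p$ equals the parity of one of the original $A_i\AND x$. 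One round does: (i) one XOR-halving step applied to every field of every word in parallel (a shift, an $\XOR$, and a fixed mask, so $\bigO(1)$ per word), after which each word is ``half empty''; and (ii) a pairwise merge, combining two half-empty words $P,Q$ into $P\OR(Q\ll s)$ for the appropriate shift $s$, which yields one full word whose fields are exactly the fields of $P$ and $Q$ interleaved. Each round halves each field's width and halves the number of words, so after $\log\logrange$ rounds there is a single full word holding $\logrange$ fields of width $w/\logrange$, and the total work over these rounds is $\bigO(\sum_j \logrange/2^j)=\bigO(\logrange)$. Finishing that last word — folding its fields down to width $1$ and compacting the $\logrange$ resulting bits into the low $\logrange$ positions (the latter in $\bigO(\log\logrange)$ doubling steps) — costs a further $\bigO(\log w)$, giving $\bigO(\logrange+\log w)$ in total.

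\emph{Part (c).} Apply exactly the same round structure, but start from the pooled collection of all $n\logrange$ words $A_i\AND x_t$. The per-round cost is again geometric, now summing to $\bigO(n\logrange)$: if the number of words exceeds $w$, the rounds stop once the field width reaches $1$, at which point all parities already sit, fully packed, in $\bigO(n\logrange/w)$ words; otherwise one ends with a single word of $n\logrange$ fields. The only work that does not shrink with the number of batched inputs is finishing the last word(s) and routing the $n\logrange$ computed parity bits into the $n$ output words; this is $\bigO(n\logrange)$ plus a single additive $\bigO(\log w)$, so the total is $\bigO(n\logrange+\log w)$.

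I expect the main technical point to be the layout bookkeeping in (b) and (c): verifying that a fold step followed by a shift-$\OR$ really does interleave the fields of two words correctly, so that no field ever contaminates another and no explicit compaction is needed between rounds, and that pooling all $n\logrange$ parities, finishing them, and demultiplexing them back to the right output positions can all be done with word operations whose costs form a geometric series rather than picking up an extra logarithmic factor. The remaining ingredients — the row layout of $A$, the correctness of XOR-halving, and the standard fact that a fixed bit permutation of $\bigO(1)$ words costs only a logarithm — are routine.
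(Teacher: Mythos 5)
Your proposal is correct and achieves all three bounds, but for parts (b) and (c) it takes the transposed route compared to the paper. Part (a) is identical (row layout, $\langle A_i,x\rangle=\PARITY(A_i\AND x)$). For (b), the paper switches to a \emph{column} representation: the $w$ columns of $A$ are stored as $\logrange$-bit fields packed into $O(\logrange)$ words, the columns indicated by the $1$-bits of $x$ are selected by masking, and the surviving $\logrange$-bit vectors are XORed together by a word-parallel halving tree over $\log w$ rounds, whose per-round cost is proportional to the current number of words and hence sums to $O(\logrange+\log w)$; part (c) just pools the selected columns of all $n$ inputs into $O(n\logrange)$ words and runs the same reduction. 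You instead stay with the row layout throughout and reduce the $\logrange$ (resp.\ $n\logrange$) words $A_i\AND x$ to their parities simultaneously, via the combined fold-and-interleave rounds. Both are geometric word-parallel tree reductions with the same cost profile, so the bounds come out the same. What each buys: the paper's column view makes the "keep the words full" bookkeeping essentially automatic (the packed column fields never need to be re-interleaved, only masked and XORed pairwise), whereas your row view requires the explicit merge step $P\OR(Q\ll s)$ and the final compaction/demultiplexing, which you correctly flag as the delicate part; in exchange, your scheme reuses the data layout of part (a) verbatim and avoids storing a second (column-major) copy of $A$. Your cost accounting, including the $O(\log w)$ tail for finishing the last word and the claim that the routing in (c) stays within $O(n\logrange)+O(\log w)$, is sound.
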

\begin{proof}(Sketch.) Assume $h=h_A$. For (a) we store the rows of $A$ as $w$-bit strings,
and obtain each bit of the hash value by a bitwise $\wedge$ operation followed by $\PARITY{}$. 
For (b) we assume the $w$ columns of $A$ are stored as $\logrange$-bit blocks, 
in $O(\logrange)$ words. An evaluation 
is effected by selecting the columns indicated by the 1-bits of $x$
and calculating the $\xor$ of these vectors in a word-parallel fashion.
In $\log{w}$ rounds, these vectors are added, halving the number of vectors in each round.
For (c), we first pack the columns selected for the $n$ input strings into $\bigO(n \logrange)$ 
words and then carry out the calculation indicated in (b),
but simultaneously for all $x_i$ and within as few words as possible.
This makes it possible to further exploit 
word-level parallelism, if $\logrange$ should be much smaller than $w$. 
\end{proof}
We shall use linear, 1-universal hashing for splitting the input set into buckets and
for replacing keys by fingerprints in \cref{sec:subquadraticRandomized}.

\begin{remark}
  \label{rem:wordLength}
  In the following, we will apply \cref{lem:evalTimeHashing}(c) to map $n$ binary strings of length $w$ 
	to hash values of length $\logrange=\bigO(\log n)$ in time $\bigO(n\log n + \log w)$.
  Since $\log{w}$ will dominate the running time only for huge word lengths, 
	we assume in the rest of the paper
	that $w=2^{\bigO(n\log n)}$ and that 
	all hash values can be calculated in time $\bigO(n\log n)$.
\end{remark}

\begin{remark}
  \label{rem:hashTable}
  When randomization is allowed, we will assume that we have constructed in expected $\bigO(n)$ time
	a standard hash table for input set $X$ with constant lookup time~\cite{FKS84}.
	(Arbitrary 1-universal classes can be used for this.)
\end{remark}

\subsection{A Hashing Lemma for 1-Universal Families}
\label{subsec:hashingLemma}

A hash family $\hashFamily$ of functions from $U$ to $M$ is called \emph{1-universal}
if $\prob{h\in\hashFamily}{h(x)=h(y)}\le 1/\card{M}$ for all $x,y\in U$, $x\neq y$.
We map a set $S\subseteq U$ with $\abs{S}=n$ into $M$ with $\abs{M}=m$ by a random element $h\in\hashFamily$.
In~\cite[Lemma 4]{BDP08} it was noted that for 1-universal families
the expected number of keys that collide with more than 
$3n/m$ other keys is bounded by $O(m)$. We state a slightly stronger version of that lemma.
(The strengthening is not essential for the application in the present paper.)
\begin{restatable}[slight strengthening of Lemma 4 in~\cite{BDP08}]{mylemma}{HashingLemma}
\label{lem:hashingLemma}
  Let $\hashFamily$ be a 1-universal class of hash functions
	from $U$ to $M$, with $m=\abs{M}$, and let $S\subseteq U$ with $\abs{S}=n$. Choose
  $h \in \hashFamily$ uniformly at random. For $i\in M$ define $B_i=\{y\in S \mid h(y)=i\}$.
  Then for $2\frac{n}{m} < t \leq n$ we have:
  \[\E{h\in\hashFamily}{\abs{\{x\in S \mid \abs{B_{h(x)}} \ge  t\}}} < \frac{n}{t-2\frac{n}{m}}\,.\]
\end{restatable}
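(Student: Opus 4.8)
The plan is to bound the random quantity $Y:=\abs{\{x\in S\mid \abs{B_{h(x)}}\ge t\}}$ \emph{pointwise}, for every fixed outcome of $h$, by a linear function of the total collision count $Z:=\sum_{i\in M}\abs{B_i}\bigl(\abs{B_i}-1\bigr)$, and only then to pass to expectations, where $1$-universality immediately gives $\E{h}{Z}\le n(n-1)/m$. Writing $b_i:=\abs{B_i}$ and grouping the keys of $S$ by their bucket, we have $Y=\sum_{i:\,b_i\ge t}b_i$. Two identities will be used: $\sum_i b_i=n$ and $\sum_i b_i^2=Z+n$.

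The heart of the argument is the pointwise estimate
\[
  Y\cdot\bigl(t-\tfrac{2n}{m}\bigr)\ \le\ \sum_i b_i^2-\tfrac{n^2}{m}\ =\ Z+n-\tfrac{n^2}{m}.
\]
To obtain it, I would first note $Y\bigl(t-\tfrac{2n}{m}\bigr)=\sum_{i:\,b_i\ge t}b_i\bigl(t-\tfrac{2n}{m}\bigr)\le\sum_{i:\,b_i\ge t}b_i\bigl(b_i-\tfrac{2n}{m}\bigr)$, since $b_i\ge t$ in every term. Then split off the remaining buckets: $\sum_{i:\,b_i\ge t}b_i\bigl(b_i-\tfrac{2n}{m}\bigr)=\sum_i b_i\bigl(b_i-\tfrac{2n}{m}\bigr)-\sum_{i:\,b_i< t}b_i\bigl(b_i-\tfrac{2n}{m}\bigr)$. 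The first sum equals $\sum_i b_i^2-\tfrac{2n}{m}\sum_i b_i=\sum_i b_i^2-\tfrac{2n^2}{m}$. For the second, the parabola $\beta\mapsto\beta\bigl(\beta-\tfrac{2n}{m}\bigr)$ attains its minimum $-\tfrac{n^2}{m^2}$ at $\beta=\tfrac nm$, so each of the at most $m$ summands is $\ge-\tfrac{n^2}{m^2}$, whence $\sum_{i:\,b_i<t}b_i\bigl(b_i-\tfrac{2n}{m}\bigr)\ge-\tfrac{n^2}{m}$; combining the pieces yields the display.

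Finally, taking expectations and using $\E{h}{Z}\le n(n-1)/m$ (each of the $n(n-1)$ ordered pairs of distinct keys of $S$ collides with probability at most $1/m$ by $1$-universality) gives
\[
  \E{h}{Y}\cdot\bigl(t-\tfrac{2n}{m}\bigr)\ \le\ \tfrac{n(n-1)}{m}+n-\tfrac{n^2}{m}\ =\ n-\tfrac nm\ <\ n,
\]
and dividing by $t-\tfrac{2n}{m}>0$ (which is exactly where the hypothesis $t>2n/m$ is used) finishes the proof. The step requiring genuine care is the choice of the per-key ``baseline'' $\tfrac{2n}{m}$ that is subtracted before charging overfull buckets against collisions: the crude bound $Y\le Z/(t-1)$ overcounts by roughly a factor $n/m$, and the refinement works precisely because the neglected underfull buckets can subtract at most $\tfrac{n^2}{m}$ in total — just small enough, once $\E{h}{Z}<n^2/m$ is plugged in, to leave the value $n$ on the right-hand side.
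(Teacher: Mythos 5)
Your proof is correct. It rests on the same central idea as the paper's argument --- charging each element of an overfull bucket against the collision count, with a per-element ``baseline'' of $2n/m$ --- but the execution is genuinely different and arguably cleaner. The paper works in the product probability space of triples $(h,x,y)$, splits the collision probability $q_h$ according to whether $x$ lies in a small or a large bucket, and invokes convexity of $z\mapsto z(z-1)$ to lower-bound the contribution of the small buckets. You instead establish a purely deterministic, pointwise inequality $Y\,(t-2n/m)\le Z+n-n^2/m$ between the two counting quantities, handle the underfull buckets by the completion-of-the-square bound $\beta(\beta-2n/m)\ge -n^2/m^2$ summed over at most $m$ buckets, and only then take expectations using $\mathbf{E}_h[Z]\le n(n-1)/m$. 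The two treatments of the small buckets yield the same total correction $-n^2/m$, so the final bounds coincide; what your version buys is the elimination of conditional probabilities and the convexity lemma in favour of linearity of expectation, and it makes transparent exactly where the hypothesis $t>2n/m$ enters (division by $t-2n/m>0$) and why the inequality is strict (the leftover $-n/m$, which is nonzero because $t\le n$ forces $n\ge 1$ and $m$ is finite). All steps check out.
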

(The bound in~\cite{BDP08} was about twice as large.
The proof is given in \Cref{subsec:app:hashingLemma}.)

In our algorithm, we will be interested in the number of elements in buckets with size 
at least three times the expectation.
Choosing $t=3\frac{n}{m}$ in \cref{lem:hashingLemma}, 
we conclude that the expected number of such elements is smaller than the number of buckets.

\begin{mycorollary}
  \label{cor:hashingLemma}
  In the setting of \cref{lem:hashingLemma} we have $\E{h\in\hashFamily}{\;\card{\condSet{x\in S}{\card{B_{h(x)}} \geq 3{n}/{m}}}\;}< m$.
\end{mycorollary}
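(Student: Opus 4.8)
The plan is to derive \cref{cor:hashingLemma} directly from \cref{lem:hashingLemma} by plugging in the threshold $t = 3n/m$. First I would check that this choice of $t$ is admissible: the lemma requires $2\frac{n}{m} < t \le n$. The left inequality holds since $3\frac{n}{m} > 2\frac{n}{m}$ whenever $n, m > 0$. The right inequality, $3\frac{n}{m} \le n$, is equivalent to $m \ge 3$; for $m = 1$ or $m = 2$ the corollary is trivial anyway, since in those cases $m$ is at most the (obvious) bound one gets, or one can simply note that for $m \le 2$ the claimed statement $\E{}{\cdots} < m$ may fail to be meaningful, so I would either assume $m \ge 3$ (the only interesting regime, as $m$ will be polynomially large in the application) or handle small $m$ by a one-line remark.

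Then, with $t = 3\frac{n}{m}$, \cref{lem:hashingLemma} gives
\[
  \E{h\in\hashFamily}{\;\card{\condSet{x\in S}{\card{B_{h(x)}} \geq 3{n}/{m}}}\;}
  < \frac{n}{\,3\frac{n}{m} - 2\frac{n}{m}\,}
  = \frac{n}{\,\frac{n}{m}\,}
  = m.
\]
That is the entire argument: the denominator $t - 2\frac{n}{m}$ collapses to exactly $\frac{n}{m}$, and the bound $\frac{n}{n/m} = m$ follows immediately.

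There is no real obstacle here — the corollary is a direct specialization, and the only thing to be slightly careful about is the admissibility condition $t \le n$, i.e.\ $m \ge 3$. I would state explicitly that we may assume $m \ge 3$ (this is the case in every application in the paper, where $m$ is a power of two chosen to be reasonably large), so that \cref{lem:hashingLemma} applies verbatim.
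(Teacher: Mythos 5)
Your proposal is correct and is exactly the paper's argument: substitute $t = 3n/m$ into \cref{lem:hashingLemma} so that the denominator becomes $n/m$ and the bound collapses to $m$. Your extra remark about the admissibility condition $t \le n$ (i.e.\ $m \ge 3$) is a small point the paper glosses over, but it does not change the substance.
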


\subsection{Set Intersection on Unsorted Word-Packed Arrays}
\label{subsec:setIntersectionWPA}

We consider the problem ``\emph{set intersection on unsorted word-packed arrays}'': 
Assume $k$ and $\ell$ are such that $k(\ell+\log{k}) \le w$,
and that two words $a$ and $b$ are given that both contain $k$ many $\ell$-bit strings:
$a$ contains $a_0,\dots,a_{k-1}$ and $b$ contains $b_0,\dots,b_{k-1}$. 
We wish to determine whether $\{a_0,\dots,a_{k-1}\}\cap\{b_0,\dots,b_{k-1}\}$ is empty or not
and find an element in the intersection if it is nonempty.

In~\cite[proof of Lemma 3]{BDP05} a similar problem is considered:
It is assumed that $a$ is sorted and $b$ is \emph{bitonic}, 
meaning that it is a cyclic rotation of a sequence that first grows and then falls. 
In this case one sorts the second sequence by a word-parallel version of bitonic merge
(time $\bigO(\log k)$), and then merges the two sequences into one sorted sequence
(again in time $\bigO(\log k)$). Identical elements now stand next to each other,
and it is not hard to identify them.
We can use a slightly slower modification of the approach of~\cite{BDP05}:
We sort both sequences by word-packed bitonic sort~\cite{AH97}, which takes time $\bigO(\log^2 k)$, 
and then proceed as before.%
\footnote{It is this slower version of packed intersection that causes our randomized \threeXOR{} algorithm
to be a little slower than the \intthreeSUM{} algorithm for $w=\Omega(\log^2{n})$.} We obtain the following result.

\begin{restatable}{mylemma}{SetIntersectionWPA}
  \label{lem:subproblemsWordPackedArrays}
  Assume $k(\ell+\log{k}) =\bigO(w)$, and   
  assume that two sequences of $\ell$-bit strings, each of length $k$, are given. 
  Then the $t$ entries that occur in both sequences 
  can be listed in time $\bigO(\log^2 k + t)$.
\end{restatable}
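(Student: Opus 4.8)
The plan is to reduce set intersection on two unsorted word-packed arrays to the bitonic-merge technique of~\cite{BDP05} by first sorting both arrays with a word-packed sorting network. Concretely: we are given two words, each holding $k$ many $\ell$-bit strings; call the sequences $(a_0,\dots,a_{k-1})$ and $(b_0,\dots,b_{k-1})$. Since $k(\ell+\log k) = \bigO(w)$, we can afford to attach to each $\ell$-bit field a $\log k$-bit ``tag'' (e.g.\ recording which input the field came from, and/or a position index), so that each element occupies $\ell' = \ell + \bigO(\log k)$ bits and all $2k$ tagged elements together still fit into $\bigO(1)$ words. The first step is to apply word-packed bitonic sort~\cite{AH97}, which implements the bitonic sorting network using word-level parallelism: a fixed compare-exchange pattern across disjoint $\ell'$-bit fields can be realized with a constant number of shifts, masks, and comparisons per network stage. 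The network has depth $\bigO(\log^2 k)$, so sorting both sequences costs $\bigO(\log^2 k)$ time. We sort so that ties are broken by the tag, which guarantees that equal $\ell$-bit values are grouped and, within a group, ordered by their origin.

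Next I would merge the two now-sorted sequences into one sorted sequence of length $2k$ in time $\bigO(\log k)$, exactly as in~\cite{BDP05}: concatenating an ascending run with a descending run yields a bitonic sequence, and one pass of the word-parallel bitonic merger (depth $\bigO(\log k)$, constant work per stage) sorts it. After this merge, any value that occurs in both input sequences appears as (at least) two adjacent fields in the combined word with equal $\ell$-bit parts but differing origin tags. So the third step is detection: shift the packed word by one field width, mask off the tags, XOR against the unshifted copy, and test field-wise for a zero $\ell$-bit block whose two neighbours have opposite origin tags. All of this is a constant number of word operations and produces a bitmask marking the positions of common elements, from which at least one witness $\ell$-bit string can be read off in $\bigO(1)$ time.

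To list all $t$ common entries rather than just one, after the detection mask is computed I would iterate: repeatedly find the lowest set bit of the mask (via a constant-time ``find first set'' built from the available multiplicative instruction set, or via a $\bigO(\log k)$-time binary search if one prefers not to assume it), extract and output the corresponding $\ell$-bit value together with a pointer back through the stored position tag, and clear that bit (and its partner) from the mask. Each extracted element costs $\bigO(1)$ amortized, so the enumeration phase runs in $\bigO(t + \log k)$ time, and the whole procedure in $\bigO(\log^2 k + t)$ as claimed. One subtlety to handle carefully: a value occurring with multiplicity in a single input sequence should not be reported as ``in both'' unless it genuinely appears in the other one; tagging by origin and checking that the two adjacent equal fields carry different origin tags takes care of this, and padding both sequences to equal length $k$ (allowed since $k(\ell+\log k)=\bigO(w)$) keeps the field layout uniform.

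The main obstacle is getting the word-level-parallel implementation details exactly right — in particular, laying out the $2k$ tagged fields so that every compare-exchange stage of both the bitonic sort and the bitonic merger touches a set of field pairs that can be separated by a single shift-and-mask, and ensuring the tag bits ride along correctly through all compare-exchanges so that the final adjacency-and-origin test is sound. These are the standard but fiddly ingredients of packed sorting; since \cref{lem:subproblemsWordPackedArrays} is invoked as a black box in \cref{sec:subquadraticRandomized}, the full verification is deferred to the appendix.
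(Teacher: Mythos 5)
Your proposal is correct and follows essentially the same route as the paper's appendix proof: pack both sequences with origin/index tags, sort with word-packed bitonic sort~\cite{AH97} in $\bigO(\log^2 k)$ time, and detect common values as adjacent equal fields with differing origin tags, extracting the $t$ witnesses in $\bigO(t)$ additional time. The only cosmetic differences are that the paper concatenates first and sorts the combined $2k$-field word (rather than sorting each half and then bitonically merging) and locates matches by re-sorting the shifted-XOR word and binary searching rather than by iterating over a detection bitmask; both variants stay within the $\bigO(\log^2 k + t)$ budget.
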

For completeness, we give a more detailed description in \Cref{subsec:subproblemsWordPackedArrays}.

\section{A Deterministic \threeXOR{} Algorithm in Quadratic Time}
\label{sec:quadraticDeterministic}

A well known deterministic algorithm for solving the \threeSUM{} problem in time $\bigO(n^{2})$ is reproduced in \cref{algo:3sumSimple}.
\begin{figure}[htpb]
    \begin{minipage}[b]{0.425\textwidth}
        \begin{algorithm}[H]
          \algo{\normalfont{3SUM}($X$)}{
            sort $X$ as $x₁ < \dots < x_n$\;
            \For{$a \in X$}{
              $(i,j) ← (1,1)$\;
              \While{$i ≤ n \AND j ≤ n$}{
                \uIf{$a + x_i < x_j$}{$i ← i +1$}
                \uElseIf{$a + x_i > x_j$}{$j ← j+1$}
                \lElse{\Return $(a,x_i,x_j)$}
              }
            }
            \Return \emph{no solution}\;
          }
          \caption{A simple quadratic 3SUM algorithm.}
          \label{algo:3sumSimple}
        \end{algorithm}
    \end{minipage}
    ~
    \begin{minipage}[b]{0.60\textwidth}
        \begin{algorithm}[H]
          \algo{\normalfont{3XOR}($X$)}{
            sort $X$ as $x₁ < \dots < x_n$\;
            $T_X ← \mathrm{makeTree}(X)$\;
            \For{$a \in X$}{
              $(i,j) ← (1,1)$\;
              $(y_i)_{1 ≤ i ≤ n} ← \mathrm{traverse}(T_X, a)$\;
              \While{$i ≤ n \AND j ≤ n$}{
                \uIf{$y_i < x_j$}{$i ← i +1$}
                \uElseIf{$y_i > x_j$}{$j ← j+1$}
                \lElse{\Return $(a,y_i ⊕ a,x_j)$}
              }
            }
            \Return \emph{no solution}\;
          }
          \caption{A quadratic 3XOR algorithm.}
          \label{algo:3xorSimple}
        \end{algorithm}
    \end{minipage}
\end{figure}
After sorting the input $X$ as $x_1<\dots<x_n$ in time $O(n\log n)$, 
we consider each $a\in X$ separately and look for triples of the form $a+b = c$. 
Such triples correspond to elements of the intersection of 
$a+X = \{a + x_1, \dots , a+x_n\}$ and $X$. 
Since $X$ is sorted, we can iterate over both $X$ and $a + X$ in ascending order 
and compute the intersection with an interleaved linear scan.

Unfortunately, the $\xor$-operation is not order preserving, \ie, 
$x < y$ does not imply $a \xor x < a \xor y$ for the lexicographic ordering on bitstrings---or, indeed, 
any total ordering on bitstrings.
We may sort $X$ and each set $a \xor X= \{a \xor x \mid x \in X\}$, for $a \in X$, 
separately to obtain an algorithm with running time $O(n^2\log n)$. 
Using fast deterministic integer sorting~\cite{Han04} reduces this to time $O(n^2\log\log n)$.
In order to achieve quadratic running time, 
one may utilize a randomized dictionary for $X$ with expected linear 
construction time and constant lookup time (like in~\cite{FKS84}) 
or (weakly non-uniform, rather complex) deterministic static dictionaries with
construction time $\bigO(n\log n)$ and constant lookup time as provided in~\cite{HMP01}. 
Once such a dictionary is available, one just has to check whether $a \xor b \in X$, for all $a,b \in X$.

Here we describe a rather simple deterministic algorithm with quadratic running time.
For this, we utilize a special binary search tree\footnote{The structure of the tree
is that of the Patricia trie~\cite{Mor68} for $X$.}
$T_X$ that allows, for arbitrary $a \in \{0,1\}^w$, 
to traverse the set $a ⊕ X = \{a ⊕ x \mid x \in X\}$ 
in lexicographically ascending order, in linear time. 
For $X ≠ ∅$, the tree $T_X$ is recursively defined as follows.
\begin{itemize}
        • If $X = \{x\}$, then $T_X$ is $\mathsf{LeafNode}(x)$, a tree consisting of a single leaf with label $x$.
        • If $|X| ≥ 2$, let $\lcp(X)$ denote the longest common prefix of the elements of $X$ 
				when viewed as bitstrings. That is, all elements of $X$ coincide on the first $k = \lvert\lcp(X)\rvert$ bits, 
				the elements of some nonempty set $X_0 \subsetneq X$ start with $\lcp(X)0$ 
				and the elements of $X_1 = X - X_0$ start with $\lcp(X)1$. 
				We define $T_X = \mathsf{InnerNode}(T_{X₀}, 0^k 1b, T_{X₁})$ for some $b\in\{0,1\}^{w-k-1}$, 
				meaning that $T_X$ consists of a root vertex with label $\ell = 0^k1 b$,
				a left subtree $T_{X₀}$ and a right subtree $T_{X₁}$.
				The choice of $b$ is irrelevant, but it is convenient to define the label more concretely 
				as $ℓ = (\max X_0)\xor(\min X_1)$.
\end{itemize}
Note that along paths of inner nodes down from the root the labels when regarded as integers are strictly decreasing.
We give an example in \cref{fig:treeRep} and  provide 
a $\bigO(n\log n)$ time construction of $T_X$ from $X$ in \cref{algo:makeTree}.

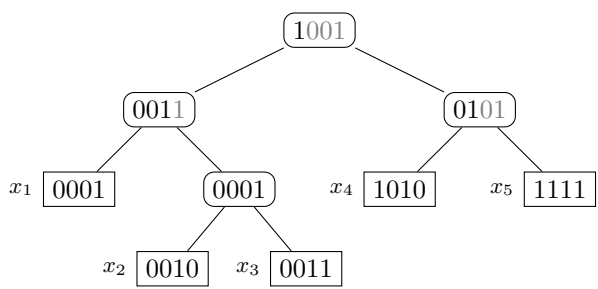
\begin{figure}[htbp]
  \centering
  \begin{tikzpicture}[
  level distance=3em,
  level 1/.style={sibling distance=12em},
  level 2/.style={sibling distance=6em}, 
  level 3/.style={sibling distance=5em},
  every node/.style = {shape=rectangle, draw, align=center},
  inner/.style = {rounded corners},
  leaf/.style = {}]
  \node[inner] {1\color{gray}001}
  child { node[inner] {001\color{gray} 1}
    child { node[leaf,label=left:\footnotesize$x₁$] {0001} }
    child { node[inner] {0001\color{gray} }
      child { node[leaf,label=left:\footnotesize$x₂$] {0010} }
      child { node[leaf,label=left:\footnotesize$x₃$] {0011} }
    }
  }
  child { node[inner] {01\color{gray} 01}
    child { node[leaf,label=left:\footnotesize$x₄$] {1010} }
    child { node[leaf,label=left:\footnotesize$x₅$] {1111} }
  };
  \end{tikzpicture}
  \caption{The tree $\tree{X}$ for $X=\{x₁ = 0001, x₂ = 0010, x₃ = 0011, x₄ = 1010, x₅ = 1111\}$.
      The first $1$-bit of the label of an inner node indicates the most significant bit that is not constant among the $x$-values managed by that subtree (the bits after the first $1$-bit are \textcolor{gray}{irrelevant}). According to the value of this bit, elements are found in the left or right subtree. Apart from the labels of the inner nodes,
			$\tree{X}$ is essentially the Patricia trie~\cite{Mor68} for $X$.
    }
  \label{fig:treeRep}
\end{figure}

In the context of $T_X = \mathsf{InnerNode}(T_{X₀}, ℓ = 0^k 1b, T_{X₁})$ as described above, the $(k{+}1)$st bit is the most significant bit where elements of $X$ differ. Crucially, this is also true for the set $a ⊕ X$ for any $a \in \{0,1\}^w$. Since the elements of $X$ are partitioned into $X₀$ and $X₁$ according to their $(k{+}1)$st bit, either all elements of $a ⊕ X₀$ are less than all elements of $a ⊕ X₁$, or vice versa, depending on whether the $(k{+}1)$st bit of $a$ is $0$ or $1$.
Using that the $(k{+}1)$st bit of $a$ is $1$ iff $a ⊕ ℓ < a$, this suggests a simple recursive algorithm to produce $a ⊕ X$ in sorted order, given as \cref{algo:traverse}.

\begin{figure}
    \begin{minipage}{0.5\textwidth}
        \begin{algorithm}[H]
          \SetKw{Yield}{yield}
          \algo{\normalfont{traverse}($T$, $a$)}{
            \uIf{$T = \mathsf{LeafNode}(x)$}{
              \Yield $a ⊕ x$
            }\Else{
              let $T = \mathsf{InnerNode}(T₀,ℓ,T₁)$\;
              \uIf{$a ⊕ ℓ > a$}{
                $\mathrm{traverse}(T₀,a); \mathrm{traverse}(T₁,a)$
              } \Else{
                $\mathrm{traverse}(T₁,a); \mathrm{traverse}(T₀,a)$
              }
            }
          }
          \caption{Given a tree $T = T_X$ and $a\in X$, the algorithm yields the elements of $a \xor X = \{a ⊕ x\mid x \in X\}$ in sorted order.}
          \label{algo:traverse}
        \end{algorithm}
    \end{minipage}
    ~
    \begin{minipage}{0.505\textwidth}
      \begin{algorithm}[H]
        \algo{\normalfont{makeTree}($X$)}{
          sort $X$ as $x₁ < \dots < x_n$\;
          let $ℓ_i = x_i ⊕ x_{i+1}, \quad 1 ≤ i < n$\;
          $\mathrm{stream} ← (\infty, x_1, \ell_1, \dots, \ell_{n-1}, x_{n}, \infty)$\;%
          \Return $\mathrm{build}()$ where\;
          \subroutine{\normalfont{build}()}{
            $\ell\gets \mathrm{pop}(\mathrm{stream})$\;
            $x\gets \mathrm{pop}(\mathrm{stream})$\;
            $T ← \mathsf{LeafNode}(x)$\;
            \While{$\mathrm{top}(\mathrm{stream})<\ell$}{
              $\ell'\gets \mathrm{top}(\mathrm{stream})$\;
              $T ← \mathsf{InnerNode}(T,\ell',\mathrm{build}())$\;
            }
            \Return $T$\;
          }
        }
        \caption{\mbox{$\bigO(n\log n)$-time algorithm\quad} \mbox{to construct $T_X$ from $X$.}}
        \label{algo:makeTree}
      \end{algorithm}
    \end{minipage}
\end{figure}

With the data structure $T_X$ in place, the strategy from \threeSUM carries over to \threeXOR as seen in \cref{algo:3xorSimple}.
Summing up, we have obtained the following result: 

\begin{theorem}
  \label{thm:quadraticDeterministic}
  On a deterministic \wordRAM{} the \emph{\threeXOR{}} problem can be solved 
  in time $O(n^{2})$.
  \qed
\end{theorem}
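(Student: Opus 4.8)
The plan is to combine the three ingredients already assembled in this section: the sorting of $X$, the tree $T_X$ with its linear-time construction via \cref{algo:makeTree}, and the traversal routine \cref{algo:traverse}, and to argue that \cref{algo:3xorSimple} is correct and runs in $O(n^2)$. First I would establish correctness and the running time of \cref{algo:makeTree}: prove by induction on $|X|$ that $T_X$ as defined recursively is exactly the tree built by $\mathrm{build}()$ on the stream $(\infty, x_1, \ell_1, \dots, \ell_{n-1}, x_n, \infty)$, using the fact that along any root-to-leaf path the inner-node labels $\ell_i = x_i \xor x_{i+1}$ are strictly decreasing as integers (the first $1$-bit of $\ell_i$ sits at the most significant bit position where $x_i$ and $x_{i+1}$ differ), so the $\mathrm{top}(\mathrm{stream}) < \ell$ test correctly decides when a subtree is complete. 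Since each element of the stream is popped or inspected $O(1)$ times after sorting, and sorting takes $O(n\log n)$, the tree is built in $O(n\log n)$ time; it has $n$ leaves and $n-1$ inner nodes.

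Next I would prove the key invariant for \cref{algo:traverse}: for every $a \in \{0,1\}^w$ and every subtree $T = T_Y$ of $T_X$ managing a subset $Y \subseteq X$, the call $\mathrm{traverse}(T, a)$ yields exactly the multiset $a \xor Y = \{a \xor x \mid x \in Y\}$ in lexicographically ascending order. The induction is on the height of $T$. The base case $T = \mathsf{LeafNode}(x)$ is immediate. For the inductive step, write $T = \mathsf{InnerNode}(T_{Y_0}, \ell, T_{Y_1})$ where $\ell = 0^k 1 b$ and the elements of $Y_0$ (resp. $Y_1$) have $(k{+}1)$st bit $0$ (resp. $1$) while all of $Y$ agrees on the first $k$ bits. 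XOR-ing with $a$ preserves which bit positions differ, so $a \xor Y_0$ and $a \xor Y_1$ still agree on their first $k$ bits and differ in bit $k{+}1$; hence every element of one block is lexicographically smaller than every element of the other, and which block comes first is determined solely by the $(k{+}1)$st bit of $a$. The observation that the $(k{+}1)$st bit of $a$ equals $1$ iff $a \xor \ell < a$ (as integers, since flipping a $1$ to a $0$ in the most significant differing position decreases the value, while flipping a $0$ to a $1$ increases it, and the bits of $\ell$ below position $k{+}1$ are irrelevant because $a\xor\ell$ and $a$ already differ at position $k{+}1$) justifies the branch in \cref{algo:traverse}. Combined with the induction hypothesis applied to $T_{Y_0}$ and $T_{Y_1}$, the concatenation is the sorted order of $a \xor Y$. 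Since each node of $T_X$ is visited once per call and $T_X$ has $O(n)$ nodes, one traversal costs $O(n)$ time (each yielded value computed by a single $\xor$).

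With these two facts, correctness and timing of \cref{algo:3xorSimple} follow. For each $a \in X$, \cref{algo:traverse} produces $(y_i)_{1 \le i \le n}$, the sorted enumeration of $a \xor X$; a triple $(a, b, c) \in X^3$ with $a \xor b = c$ exists with this fixed $a$ iff $(a \xor X) \cap X \neq \emptyset$, and the interleaved linear scan over the sorted list $(y_i)$ and the sorted list $(x_j)$ detects such an intersection element, returning $(a, y_i \xor a, x_j)$ with $y_i = x_j$, i.e. $a \xor (y_i \xor a) = y_i = x_j$, a valid solution. If no such pair is found for any $a$, no solution exists. The per-$a$ cost is $O(n)$ for the traversal plus $O(n)$ for the scan, so the main loop is $O(n^2)$; the preprocessing (sorting and $\mathrm{makeTree}$) is $O(n\log n)$. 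Altogether the running time is $O(n^2)$ on a deterministic \wordRAM{}, proving \cref{thm:quadraticDeterministic}.

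The only genuinely delicate point — and the place I would spend the most care — is the correctness argument for \cref{algo:traverse}, specifically the equivalence ``$(k{+}1)$st bit of $a$ is $1$'' $\iff$ ``$a \xor \ell < a$'' and the claim that XOR-ing with $a$ leaves the block structure of $T_Y$ intact at every level. Everything else (the stream-based tree construction, the interleaved scan) is a routine adaptation of the standard \threeSUM{} argument and of classical Patricia-trie bookkeeping.
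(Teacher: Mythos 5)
Your proposal is correct and follows essentially the same route as the paper: the same Patricia-trie-like tree $T_X$ with the stream-based $O(n\log n)$ construction, the same traversal argument hinging on the equivalence between the $(k{+}1)$st bit of $a$ being $1$ and $a\xor\ell<a$, and the same interleaved linear scan inherited from the \threeSUM{} algorithm. The inductive correctness arguments you sketch for $\mathrm{makeTree}$ and $\mathrm{traverse}$ match the paper's \cref{claim:correctness-make-tree} and its surrounding discussion, so there is nothing further to add.
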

In \cref{algo:makeTree} we provide a linear time construction of $T_X$ from a stream containing the sorted array $X$ interleaved with the labels $ℓ_i = x_i ⊕ x_{i+1}$ (due to sorting the total runtime is $O(n\log n)$). Despite its brevity, the recursive build function is somewhat subtle.

\begin{claim}[Correctness of \cref{algo:makeTree}]
  \label{claim:correctness-make-tree}
  If build() is called while the stream contains the elements $(ℓ_i,x_{i+1},…,x_n,ℓ_n = ∞)$, the call consumes a prefix of the stream until $\mathsf{top}(\mathrm{stream}) = ℓ_j$ where $j = \min\{j > i \mid ℓ_j ≥ ℓ_i\}$. It returns $T_X$ where $X = \{x_{i+1},…,x_j\}$.
\end{claim}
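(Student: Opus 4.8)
The plan is to prove the claim by strong induction on $n-i$, equivalently on the length of the portion of the stream still present when \texttt{build()} is invoked, which is legitimate because every recursive call of \texttt{build()} starts at an index strictly larger than that of its caller. Throughout I set $\ell_0=\ell_n=\infty$, so the top-level call from \cref{algo:makeTree} is the case $i=0$, for which the claim forces $j=n$ and returns $T_X$. Everything hinges on one structural fact about the trie: for a block of distinct strings $x_a<\dots<x_b$ (with $a<b$), among $\ell_a,\dots,\ell_{b-1}$ there is a \emph{unique} maximizing index $c$, and $T_{\{x_a,\dots,x_b\}}=\mathsf{InnerNode}(T_{\{x_a,\dots,x_c\}},\ell_c,T_{\{x_{c+1},\dots,x_b\}})$. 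This follows straight from the definition of $T_X$: the most significant bit on which $x_a,\dots,x_b$ disagree is the leading $1$-bit of $x_a\xor x_b$; all elements agree on every bit above it, so for $k\ne c$ the pair $x_k,x_{k+1}$ lies on one side of that bit, whence $\ell_k$ is \emph{strictly} below $\ell_c=(\max X_0)\xor(\min X_1)$, the label prescribed by the recursive definition.

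For the base case, if $\ell_{i+1}\ge\ell_i$ then $j=i+1$: the two pops give $\ell=\ell_i$ and $x=x_{i+1}$, the test $\mathrm{top}(\mathrm{stream})=\ell_{i+1}<\ell_i$ fails immediately, and \texttt{build()} returns $\mathsf{LeafNode}(x_{i+1})=T_{\{x_{i+1}\}}$ leaving $\ell_{i+1}=\ell_j$ on top. For the inductive step ($\ell_{i+1}<\ell_i$), I would introduce the sequence of indices whose labels the \textbf{while}-loop peeks: $p_0=i+1$ and $p_{t+1}=\min\{k>p_t\mid \ell_k\ge\ell_{p_t}\}$, stopping at the first $p_r$ with $\ell_{p_r}\ge\ell_i$.

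I would then establish three things. (i) The peeked labels strictly increase, $\ell_{p_0}<\dots<\ell_{p_{r-1}}<\ell_i$: the weak inequalities are the definition of $p_{t+1}$ and the last one holds because $p_{r-1}$ was peeked, while a tie $\ell_{p_t}=\ell_{p_{t+1}}$ is impossible since within the block $x_{p_t},\dots,x_{p_{t+1}+1}$ every label other than $\ell_{p_t}$ and $\ell_{p_{t+1}}$ is $<\ell_{p_t}$, so equality would produce two maximizers, contradicting uniqueness. (ii) Consequently every $\ell_k$ with $i<k<p_r$ is $<\ell_i$ (it lies below the relevant peeked label, which lies below $\ell_i$), so $p_r=\min\{k>i\mid \ell_k\ge\ell_i\}=j$. (iii) The recursive call made right after peeking $\ell_{p_t}$ finds the stream beginning $(\ell_{p_t},x_{p_t+1},\dots)$ --- the child's first \texttt{pop} removes exactly the label the parent merely peeked --- so by the inductive hypothesis it returns $T_{\{x_{p_t+1},\dots,x_{p_{t+1}}\}}$ and leaves $\ell_{p_{t+1}}$ on top, which is precisely the next label the loop inspects (and the loop terminates exactly when this label is $\ell_{p_r}=\ell_j\ge\ell_i$).

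It remains to identify the tree assembled by the loop, which I would do by a short inner induction on the pass number $t$: after $t$ passes, $T$ equals $T_{\{x_{i+1},\dots,x_{p_t}\}}$. This holds for $t=0$ since $T=\mathsf{LeafNode}(x_{i+1})=T_{\{x_{i+1}\}}$, and pass $t$ replaces $T$ by $\mathsf{InnerNode}(T_{\{x_{i+1},\dots,x_{p_{t-1}}\}},\ell_{p_{t-1}},T_{\{x_{p_{t-1}+1},\dots,x_{p_t}\}})$, which by the structural fact equals $T_{\{x_{i+1},\dots,x_{p_t}\}}$ precisely because (i) makes $p_{t-1}$ the unique maximizer of $\ell_k$ over $i+1\le k<p_t$. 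Taking $t=r$ gives the returned tree $T_{\{x_{i+1},\dots,x_j\}}$ and the claimed final stream top $\ell_j$; specializing to $i=0$ then yields $\mathrm{makeTree}(X)=T_X$. I expect the delicate points to be bookkeeping item (iii) --- the peek/pop handoff between a call and its recursive child --- and the uniqueness of the trie split point, which is exactly what lets the left-leaning spine produced by the loop coincide with the genuinely nested trie $T_X$.
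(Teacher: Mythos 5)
Your proof is correct, and while it shares the paper's outer skeleton (induction on the claim as stated; the paper inducts on $j-i$, you on $n-i$, which is equivalent since recursive calls strictly increase the sentinel index), it handles the while loop by a genuinely different decomposition. The paper singles out the index $m$ of the maximal label among $\ell_{i+1},\dots,\ell_{j-1}$ --- the top-level split of $T_X$ --- notes that the $\ell_m$-call is the \emph{last} direct recursive call, and disposes of all earlier iterations in one stroke by applying the induction hypothesis counterfactually to the truncated stream $(\ell_i,x_{i+1},\dots,x_m,\infty)$ to obtain $T_{X_0}$, then once more to the $\ell_m$-call for $T_{X_1}$. You instead unroll the loop completely: you name the peeked indices $p_0<\dots<p_r$, invoke the hypothesis once per iteration, and prove the invariant $T=T_{\{x_{i+1},\dots,x_{p_t}\}}$ by an inner induction, using the strict uniqueness of the maximal label in every contiguous block to match each left-leaning $\mathsf{InnerNode}$ assembly with the trie's own split. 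Your route is longer but makes explicit two things the paper leaves implicit --- the strictness/uniqueness of the maximizer (the paper only remarks that labels decrease strictly along root paths) and the peek/pop handoff between parent and child --- and avoids the ``as though the stream had been truncated'' counterfactual; the paper's version buys brevity by mirroring the trie's top-level structure directly.
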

Once this is established, the correctness of makeTree immediately follows as for the outer call we have $i = 0$ and $j = n$ (with the understanding that $∞ ≥ ∞$).
\begin{proof}[Proof of \cref{claim:correctness-make-tree}]
By the $ℓ$-call we mean the (recursive) call to build() with $\mathsf{top}(\mathrm{stream}) = ℓ$. In particular the $ℓ$-call consumes $ℓ$ from the stream and our claim concerns the $ℓ_i$-call. It is clear from the algorithm that an $ℓ$-call can only invoke an $ℓ'$-call if $ℓ' < ℓ$. Therefore the $ℓ_i$-call cannot directly or indirectly cause the $ℓ_j$-call since $ℓ_j ≥ ℓ_i$. At the same time, the $ℓ_i$-call can only terminate when $\mathsf{top}(stream) ≥ ℓ_i$. This establishes that $ℓ_j = \mathsf{top}(stream)$ when the $ℓ_i$-call ends – the first part of our claim.

Next, note that since $X$ is sorted, there is some $m$ such that we have $X₀ = \{x_{i+1},…,x_m\}$ and $X₁ = \{x_{m+1},…,x_j\}$ where $X = X₀ ∪ X₁$ is the partition from the definition of $T_X$. Moreover, $ℓ_m$ is the largest label among $ℓ_{i+1},…,ℓ_{j-1}$. This implies that the $ℓ_m$-call is \emph{directly} invoked from the $ℓ_i$-call. Just before the $ℓ_m$-call is made, the $ℓ_i$-call played out just as though the stream had been $(ℓ_i,x_{i+1},…,x_m,ℓ_m' = ∞)$, which would have produced $T_{X₀}$ by induction\footnote{Formally the induction is on the value of $j-i$. The case of $j-i = 1$ is trivial.}. However, due to $ℓ_m = \mathsf{top}(\mathrm{stream}) < ℓ = ℓ_i$, instead of returning $T = T_{X₀}$, the while loop is entered (again) and produces $\mathsf{InnerNode}(T = T_{X₀}, ℓ = ℓ_m, \mathrm{build}())$. The stream for the $ℓ_m$-call is $(ℓ_m,…,x_n,ℓ_n)$ and $ℓ_j$ is the first label not smaller than $ℓ_m$. So, again by induction, the $ℓ_m$-call produces $T_{X₁}$ and ends with $\mathsf{top}(\mathrm{stream}) = ℓ_j$. Given this, it is clear that afterwards the loop condition in the $ℓ_i$-call is not  satisfied (since $ℓ_j ≥ ℓ_i$) and the new $T = T_X$ is returned immediately, establishing the second part of the claim.
\end{proof}

\ignore{
In \cref{algo:makeTree} we provide a linear time construction of $T_X$ from a stream containing the sorted array $X$ interleaved with the labels $ℓ_i = x_i ⊕ x_{i+1}$ (due to sorting the total runtime is $O(n\log n)$). Despite its brevity, the recursive build function is quite subtle.
\begin{itemize}
  \item Each (recursive) function call $\mathrm{build}()$ is invoked on a suffix $(\ell_i, x_{i+1}, \ell_{i+1}, \dots, \ell_{n-1}, x_{n}, \ell_n = ∞)$, $0\leq i<n$, of the original $\mathrm{stream}$, where $\ell_0=\ell_n=\infty$.
        Let's call $\ell_i$ the \emph{sentinel} of the function call.
        (There is no invocation for $i=n$ since a $\mathrm{build}()$ call with sentinel $\ell$ only causes recursive $\mathrm{build}()$ calls with sentinels smaller than $\ell$.)
        Let $i<j\leq n$ be the smallest index such that $\ell_i \leq \ell_j$.
        We call $\ell_j$ the \emph{terminator} of the function call.
        It is easy to see that $\mathrm{build}()$ with sentinel $\ell_i$ and terminator $\ell_j$ consumes exactly the prefix $(\ell_i, x_{i+1}, \dots, \ell_{j-1}, x_j)$ of the current $\mathrm{stream}$.
  \item \textbf{Observation:}
        Let $\mathrm{build}()$ be invoked on $(\ell_i, x_{i+1}, \ell_{i+1}, \dots, \ell_{n-1}, x_{n}, \ell_n=\infty)$, $0\leq i<n$, where $\ell_i$ is the sentinel and $\ell_j$ ($i<j\leq n$) is the terminator.
        Then, the output is $\tree{X}$, for $X=\{x_{i+1}, \dots, x_{j}\}$, and the remaining stream is $(\ell_j, x_{j+1}, \ell_{j+1}, \dots, \ell_{n-1}, x_{n}, \ell_n=\infty)$.
        
        Hence, $\mathrm{build}()$ on $(\infty, x_1, \ell_1, x_2, \dots, \ell_{n-1}, x_{n}, \infty)$ computes $\tree{\{x_1, \dots, x_n\}}$, as required.
  \item Obviously, the previous statement is true for $n=1$ or $j=i+1$.
        Now, let $n\geq 2$, $j>i+1$, and $i<m<j$ be such that $\ell_{m}=x_{m}\xor x_{m+1}$ is the largest label in $\{\ell_{i+1}, \dots, \ell_{j-1}\}$.
        Since $X=\{x_{i+1}, \dots, x_{j}\}$ is sorted, all elements in $X_0=\{x_{i+1}, \dots, x_{m}\}$ start with $0^k0$, for some $0\leq k<w$, whereas all elements in $X_1=\{x_{m+1}, \dots, x_{j}\}$ start with $0^k1$.
        Hence, we have to show that the constructed tree is $T_X = \mathsf{InnerNode}(T_{X₀}, ℓ_m, T_{X₁})$.
        Since the sentinels in recursive $\mathrm{build}()$ calls are decreasing, the $\mathrm{build}$ call with sentinel $\ell_{m}$ is invoked directly from our current call with sentinel $\ell_{i}$.
        Furthermore, it is the last direct invocation of a recursive call from $\ell_i$, since all smaller labels $\ell_{m+1}, \dots, \ell_{j-1}$ will be (indirectly) invoked by $\ell_{m}$.
        Therefore, the computed tree is $\mathsf{InnerNode}(T_0, ℓ_m, T_1)$ for some trees $T_0, T_1$.
        $T_0$ is computed by consuming the prefix $(\ell_i, x_{i+1}, \ell_{i+1}, \dots, \ell_{m-1}, x_{m})$ and returning to the function call with sentinel $\ell_i$.
        The result is the same as for input stream $(\ell_i, x_{i+1}, \ell_{i+1}, \dots, \ell_{m-1}, x_{m}, \infty)$.
        By induction, $T_0 = \tree{X_0}$.
        On the other hand, $T_1$ is created by consuming $(\ell_{m}, x_{m+1}, \ell_{m+1}, \dots, \ell_{j-1}, x_{j})$, and in the same way, $T_1=\tree{X_1}$.
\end{itemize}
}
\section{A Subquadratic Randomized Algorithm}
\label{sec:subquadraticRandomized}

In this section we present a subquadratic expected time algorithm for the \threeXOR{} problem.
Its basic structure is the same as in the corresponding algorithm for \intthreeSUM{} presented in~\cite{BDP08}, 
in particular, it uses buckets and fingerprints, word packing, word-level parallelism, and lookup tables.
Changes are made where necessary to deal with the different setting.
This makes it a little more difficult in some parts of the algorithm 
(mainly because ${\XOR}$-ing a sorted sequence with some $a$ will destroy the order)
and easier in other parts (in particular where linearity of hash functions is concerned).
Altogether, we get an expected running time that is the same as in \cite{BDP08}
for $w=\bigO(\log^2 n)$ and slightly worse for larger $w$.
Recall we assume $w=2^{\bigO(n\log n)}$ throughout.

\begin{restatable}{theorem}{SubquadraticRandomized}
  \label{thm:subquadraticRandomizedShortWords}
  A randomized \wordRAM{} with word length $w$ can solve the \emph{\threeXOR{}} problem
  in expected time 
  \[\bigO\left(n^{2}\cdot\min\left\{\frac{\log^{3}{w}}{w}, \frac{(\log{\log{n}})^2}{\log^2{n}}\right\}\right) \quad \text{for $w=\bigO(n\log n)$,}\]
  and $\bigO(n\log^2 n)$, otherwise.
\end{restatable}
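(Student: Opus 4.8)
The plan is to transplant the machinery of the \intthreeSUM{} algorithm of~\cite{BDP08} into the \threeXOR{} setting, exploiting that here the hash functions can be taken $\mathbb{Z}_2$-linear (\cref{subsec:LinearUniversalHashing}): partition $X$ into buckets with a random linear, $1$-universal $h$; inside each bucket replace keys by short linear fingerprints; word-pack the fingerprints; and resolve every relevant bucket triple either by word-packed set intersection (\cref{lem:subproblemsWordPackedArrays}) or, for short words, by a precomputed lookup table. The randomness is cheap to undo: a triple $a\xor b=c$ of $X$ always survives under $h$ and under the fingerprint hash (by linearity $h(a)\xor h(b)=h(c)$, and likewise for fingerprints), whereas a \emph{spurious} match is precisely these two hash functions both vanishing on some $a\xor b\xor c\neq 0$ — an event of probability $2^{-s}$, $s$ the combined output length, by $1$-universality — and is discarded in $\bigO(1)$ time against a constant-lookup-time dictionary for $X$ (\cref{rem:hashTable}). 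All hash values are computed in $\bigO(n\log n)$ time by \cref{lem:evalTimeHashing}(c) and \cref{rem:wordLength}. As a warm-up that already settles $w=\omega(n\log n)$, I would first reduce the keys to $\Theta(\log n)$ bits with a linear $h$ (chosen injective on $X$, which a constant number of trials ensures); then all $n$ fingerprints, each carrying a $\ceil{\log n}$-bit index, fit into $\bigO(1)$ machine words, so for each $a\in X$ one packed $\xor$ yields the fingerprints of $a\xor X$, which \cref{lem:subproblemsWordPackedArrays} intersects with those of $X$ in time $\bigO(\log^2 n+t_a)$; verifying the $\bigO(n)$ spurious hits gives total time $\bigO(n\log^2 n)$.

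For $w=\bigO(n\log n)$ I would hash the keys with a linear $h$ into $m$ buckets and call a bucket \emph{small} if it holds at most $3n/m$ keys; by \cref{cor:hashingLemma} the expected number of keys in non-small buckets is below $m$. Inside each small bucket a linear hash produces $\ell$-bit fingerprints, word-packed together with $\ceil{\log(3n/m)}$-bit local indices, the parameters being chosen so that $(3n/m)(\ell+\log(3n/m))=\bigO(w)$ (one word per small bucket). By linearity $h(a)=i,\ h(b)=j$ force $h(c)=i\xor j$ for every triple $a\xor b=c$, so such a triple with all three keys in small buckets is witnessed at the bucket triple $(B_i,B_j,B_{i\xor j})$. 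Above a threshold $w_0=\Theta(\log^2 n\log\log n)$ I would take $\ell=\Theta(\log w)$ — enough to keep the expected number of spurious hits at $\bigO(n^2\log^3 w/w)$, since the bucket index already supplies $\log(n/m)$ bits of filtering — and $m=\Theta(n\log w/w)$, so $3n/m=\Theta(w/\log w)$; then for each of the $\bigO(m^2)$ bucket triples and each fingerprint $g(b)$, $b\in B_j$, I XOR $g(b)$ into the packed word of $B_i$ and intersect the result with $B_{i\xor j}$ via \cref{lem:subproblemsWordPackedArrays} (time $\bigO(\log^2(n/m)+t)$), verifying hits, while keys in non-small buckets are handled by brute force over the dictionary. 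Summing gives $\bigO(mn\log^2(n/m))+\bigO(mn)=\bigO(n^2\log^3 w/w)$, the brute-force term being affordable because $m/n=\Theta(\log w/w)\le\log^3 w/w$.

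For $w\le w_0$ I would first reduce to $\Theta(\log n)$ bits, then take $m=\Theta(n\log\log n/\log n)$ and $\ell=\ceil{3\log\log n}$, so that $3n/m=\Theta(\log n/\log\log n)$ and the concatenation of three packed small buckets occupies at most $\frac{1}{2}\log n$ bits. In $o(n)$ time I would precompute a table mapping such a triple of words to the packed list of local-index pairs $(u,v)$ whose two fingerprints $\xor$ to an entry of the third word (the same table serves the queries below, viewing a single key as a one-element bucket). Each of the $\bigO(m^2)$ bucket triples then costs $\bigO(1)$ plus the candidate pairs it emits, and a $1$-universality estimate over the fingerprint hash (with $|B_{i\xor j}|\le 3n/m$) bounds the expected total number of candidate pairs — hence of dictionary verifications — by $o(n^2(\log\log n/\log n)^2)$. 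The step I expect to be the main obstacle is the accounting for the non-small buckets here: brute force over their $\Theta(n\log\log n/\log n)$ keys would cost $\Theta(mn)$, a factor $\log n/\log\log n$ over budget. The fix I would pursue is that a triple using a key $y$ from a non-small bucket either uses a \emph{second} such key — and there are only $|Y|^2$ pairs $y\xor y'$ to test against the dictionary, where $Y$ is the set of non-small keys — or has its other two keys in small buckets $B_j$ and $B_{h(y)\xor j}$, which is found by one table lookup per pair $(y,\text{small bucket})$, i.e.\ $\bigO(|Y|m)$ lookups. Resampling $h$ until $|Y|\le 2m$ (an expected $\bigO(1)$ extra hashings, by Markov's inequality and \cref{cor:hashingLemma}) makes $|Y|^2$ and $|Y|m$ both $\bigO(m^2)$, so the expected running time here is $\bigO(m^2)=\bigO(n^2(\log\log n/\log n)^2)$.

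Using whichever of the two bucketing variants is cheaper for the given $w$ yields the claimed $\min$. What remains is routine: checking that $\ell$, $m$, the table-input length, and the packing constraint $k(\ell+\log k)\le w$ can be met simultaneously in each $w$-range; discharging the $1$-universality estimates on spurious hits and on candidate pairs (using \cref{lem:hashingLemma}/\cref{cor:hashingLemma} for the bucket-size bounds); and verifying that $w_0=\Theta(\log^2 n\log\log n)$ is indeed where $\log^3 w/w$ and $(\log\log n/\log n)^2$ cross, so that each variant is applied exactly where its bound is the smaller one.
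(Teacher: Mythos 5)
Your proposal is correct and follows essentially the same route as the paper: two levels of linear $1$-universal hashing, the small/overfull-bucket dichotomy from \cref{cor:hashingLemma} with resampling of $h$, word-packed bitonic intersection (\cref{lem:subproblemsWordPackedArrays}) for $w=\Omega((\log^2 n)\log\log n)$ and lookup tables below that, with the same parameter choices and the same crossover point $\Theta((\log^2 n)\log\log n)$. The only deviations are organizational — in the long-word case you brute-force every triple touching a non-small key in $\bigO(|Y|n)=\bigO(mn)$ time where the paper instead splits by the number of bad elements, and your lookup table returns the matching index pairs rather than a single bit — and both variants meet the stated budgets.
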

The crossover point between the $w$ and the $\log n$ factor is $w = (\log^2 n)\log\log n$.
The only difference to the running time of~\cite{BDP08} is in an 
extra factor $\log w$ in the word-length-dependent part.

\begin{proof}
  We briefly describe the main ideas of the algorithm.
  For full details, see \cref{sec:app:subquadraticRandomized}.
  If $w=\omega(n\log n)$, we proceed as for $w=\Theta(n\log n)$.
  We use two levels of hashing.

  \subparagraph{Good and Bad Buckets}
  We split $X$ into $R=2^r=o(n)$ \emph{buckets} $\bucket{X}{u}$, $u\in\{0, 1\}^r$, using a randomly chosen hash function $h_1\in\hashFamily^{\text{lin}}_{w,r}$.
  By linearity, for every solution $a\xor b=c$ we also have $h_1(a)\xor h_1(b)=h_1(c)$.
  Given $a\in\bucket{X}{u}$ and $b\in\bucket{X}{v}$, we only have to inspect bucket $\bucket{X}{u\xor v}$ when looking for a $c\in X$ such that $a\xor b=c$.
  
  For $a\in X$, the expected size of bucket $\bucket{X}{h_1(a)}$ is $n/R$.
  A bucket of size larger than $3n/R$ is called \emph{bad}, as are elements of bad buckets.
  All other buckets and elements are called \emph{good}.
  By \cref{cor:hashingLemma}, the expected number of bad elements is smaller than $R$.
  We can even assume that the total number of bad elements is smaller than $2R$.
  (By Markov's inequality, we simply have to repeat the choice of $h_1$ expected $\bigO(1)$ times until this condition is satisfied.)
  
  \subparagraph{Fingerprints and Word-Packed Arrays}
  Furthermore, we use another hash function $h_2\in\hashFamily^{\text{lin}}_{w,p}$ for some appropriately chosen $p$ to calculate $p$-bit \emph{fingerprints} for all elements in $X$.
  If $(3n/R)\cdot p\leq w$, we can pack all fingerprints of elements of a good bucket $\bucket{X}{u}$ into one word $\wpa{X}{u}$.
  This packed representation is called \emph{word-packed array}.
  Again by linearity, for every solution $a\xor b=c$ we have $h_2(a)\xor h_2(b)=h_2(c)$.
  On the other hand, the expected number of \emph{colliding triples}, i.\,e., triples with $a\xor b\neq c$ but $h_1(a)\xor h_1(b)=h_1(c)$ and $h_2(a)\xor h_2(b)=h_2(c)$, is at most $2n^3/(R\cdot 2^p)$.
  
  \medskip
  \noindent The total time for all the hashing steps described so far is $\bigO(n\cdot(r+p))$, see \cref{subsec:LinearUniversalHashing}.
  We consider two choices of $R=2^r$ and $p$, cf.~\cite[proof of Lemma 3]{BDP08} and \cite[proof of Thm. 2]{BDP08}.
  The first one is better for larger words of length $w=\Omega((\log^2{n})\log\log n)$ whereas the second one yields better results for smaller words.
  In both cases, we search for triples with a fixed number of bad elements separately.
  The strategies for finding triples of good elements correspond to the approach for \intthreeSUM{} in~\cite{BDP08}.
  However, for triples with at least one bad element we have to rely on a more fine-grained examination than in~\cite{BDP08}.
  For this, we will use hash tables and another lookup table.
  
  \subparagraph{Long Words: Exploiting Word-Level Parallelism}
  For word lengths $w=\Omega((\log^2 n)\log\log n)$, we choose $R = \ceil{6\cdot n\cdot(\log{w})/w}$ and $p = \floor{2\cdot\log{w}}$ to be able to pack all fingerprints of elements of a good bucket into one word.
  We examine triples with at most one and at least two bad elements separately, as seen in \cref{algo:subquadratic3XOR} in \cref{subsec:pseudocode}.
  
  When looking for triples with at most one bad element, we do the following for every (good or bad) $a\in X$ and $u\in\{0, 1\}^r$ where $\bucket{X}{u}$ and the corresponding bucket $\bucket{X}{h_1(a)\xor u}$ are good (as in \cite[proof of Lemma 3]{BDP08} for all good elements):
  We $\XOR$ every fingerprint of the word-packed array $\wpa{X}{u}$ with $h_2(a)$.
  Then, we apply \cref{lem:subproblemsWordPackedArrays} to get a list of common pairs in this modified word-packed array and $\wpa{X}{h_1(a)\xor u}$.
  For each such pair, we only have to check whether it derives from a non-colliding triple.
  Since we can stop when we find a non-colliding triple
  and since the expected total number of colliding triples is
  $\bigO(n^{2}/(w\log{w}))$,
  we are done in expected time $\bigO(n\cdot R\cdot\log^{2}{w} + n^{2}/(w\log{w})) = \bigO(n^{2}(\log^{3}{w})/w)$.
  (The corresponding strategy in~\cite{BDP08} is only used to examine triples of good elements.)
  
  In order to examine all triples with at least two bad elements,
  we provide a hash table for $X$ with expected 
  construction time $\bigO(n)$ and constant lookup time~\cite{FKS84}.
  Now, for each of the at most $4 R^2 = \bigO(n^2(\log^2 w)/w^2)$ pairs $(a, b)$ of bad elements we can check if $a\xor b\in X$ in constant time.%
  \footnote{Note that it would not be possible to derive expected time $\bigO(R^2)$ for checking all pairs of bad elements
    if we did not start all over if the number of keys in bad buckets is at least $2R$.}
  
  The total expected running time for this parameter choice is $\bigO(n^2(\log^3{w})/w)$.
  
  \subparagraph{Short Words: Using Lookup Tables}
  For word lengths $w=\bigO((\log^2 n)\log\log n)$, we choose $R = \ceil{55\cdot n\cdot(\log{\log n})/\log n}$ and $p = \floor{6\cdot\log{\log{n}}}$ to pack all fingerprints of elements of a good bucket into $(\frac{1}{3}-\varepsilon)\log{n}$ bits, for some $\varepsilon>0$.
  
  We start by looking for triples with no bad element.
  For this, we consider all $\leq R^2$ triples of corresponding good buckets (as in \cite[proof of Thm. 2]{BDP08}).
  We use a lookup table of size $n^{1-\Omega(1)}$ to check whether such a triple of buckets yields a triple of fingerprints (in the word-packed arrays) with $h_2(a)\xor h_2(b)=h_2(c)$ in constant time.
  If this is the case, we search for a corresponding triple $a\xor b=c$ in the buckets of size $\bigO((\log n)/\log{\log{n}})$.
  Since one table entry can be computed in time $\bigO((({\log{n}})/\log{\log{n}})^{3})$,
  setting up the lookup table takes time $n^{1-\Omega(1)}$.
  Furthermore, the expected $\bigO(n^{2}/((\log{\log{n}})\log^5{n}))$ colliding triples cause additional expected running time $\bigO(n^2/((\log\log n)^4\log^2 n)$.
  Since we can stop when we find a non-colliding triple, the total expected time is $\bigO(R^2) = \bigO(n^2(\log\log n)^2/\log^2 n)$.%
  
  Searching for triples with exactly one bad element can be done in a similar way.
  For each bad element $a\in \bad{X}$ and each good bucket $\bucket{X}{u}$, $u\in\{0, 1\}^r$, we $\XOR$ all fingerprints in the word-packed array $\wpa{X}{u}$ with $h_2(a)$ and use a lookup table to check whether it has some fingerprints in common with the word-packed array $\wpa{X}{h_1(a)\xor u}$ of the corresponding good bucket.
  If this lookup yields a positive result, we check all pairs in the corresponding buckets.
  As before, the expected running time is $\bigO(R^{2})$, including the time due to colliding triples.
  
  Examining all triples with at least two bad elements can be done using a hash table as mentioned above in expected time $\bigO(n+R^2)$.
  
  The total expected running time for this parameter choice is $\bigO(n^2(\log{\log{n}})^2/\log^2{n})$.
\end{proof}


\section{Conditional Lower Bounds from the \threeXOR{} Conjecture}
\label{sec:condLowerBounds}

As already mentioned in \Cref{sec:intro}, 
the best \wordRAM{} algorithm for \intthreeSUM{} currently known \cite{BDP08} 
can solve this problem in expected time 
$\bigO(n^{2}\cdot\min\{\frac{\log^{2}{w}}{w}, \frac{(\log{\log{n}})^{2}}{\log^{2}{n}}\})$ for $w=\bigO(n\log n)$.
The best deterministic algorithm \cite{Ch18} takes time $n^2(\log\log n)^{\bigO(1)}/\log^2 n$.
It is a popular conjecture that every algorithm for \threeSUM{} 
(deterministic or randomized) needs (expected) time $n^{2-o(1)}$. 
Therefore, this conjectured lower bound can be used as a basis for conditional lower bounds for a wide range of other problems \cite{GO95,JV16,KPP16,P10}.

Similarly, it seems natural to conjecture that every algorithm for the related \threeXOR{} problem 
(deterministic or randomized) needs (expected) time $n^{2-o(1)}$.
(In \cref{thm:subquadraticRandomizedShortWords}, 
the upper bound for short word lengths is $n^2\frac{(\log{\log{n}})^2}{\log^2{n}} 
= n^{2-(2\log{\log{n}}-2\log{\log{\log{n}}})/\log{n}}$ where $(2\log{\log{n}}-2\log{\log{\log{n}}})/\log{n} = o(1)$.)
Therefore, it is a valid candidate for reductions to other computational problems \cite{JV16,PS16}.

The general strategy from~\cite{BDP08}, already employed in \Cref{sec:subquadraticRandomized}, 
is quite similar to the methods in~\cite{KPP16}.
Therefore, we are able to reduce \threeXOR{} to \offSetDisjointness{} and \offSetIntersection{}, too.
Hence, the conditional lower bounds for the problems mentioned in \cite{KPP16} (and bounds for dynamic problems from \cite{P10}) also hold with respect to the \threeXOR{} conjecture.
A detailed discussion can be found in \cite{PS16}.
Below, we will outline the general proof strategy.

\subsection{\OffSetDisjointness{} and \OffSetIntersection}
\label{subsec:offSetDisInt}

We reduce \threeXOR{} to the following two problems.

\begin{problem}[\OffSetDisjointness]
  \label{prob:offSetDis}
  \textbf{Input:}
  Finite set $C$,
  finite families $A$ and $B$ of subsets of $C$,
  $q\in\mathbb{N}$ pairs of subsets $(S, S')\in A\times B$.
  
  \noindent\textbf{Task:}
  Find all of the $q$ pairs $(S, S')$
  with $S\cap S'\neq\emptyset$.
\end{problem}

\begin{problem}[\OffSetIntersection]
  \label{prob:offSetInt}
  \textbf{Input:}
  Finite set $C$,
  finite families $A$ and $B$ of subsets of $C$,
  $q\in\mathbb{N}$ pairs of subsets $(S, S')\in A\times B$.
  
  \noindent\textbf{Task:}
  List all elements of the intersections $S\cap S'$
  of the $q$ pairs $(S, S')$.
\end{problem}


\subsection{Reductions from \threeXOR}
\label{subsec:reduction}

By giving an expected time $\leq n^{2-\Omega(1)}$ reduction from \threeXOR{} to \offSetDisjointness{} and \offSetIntersection, we can prove lower bounds for the latter two problems, conditioned on the \threeXOR{} conjecture.
\ignore{As in \cite{JV16}, using universe reduction we can assume that our \threeXOR{} instance consists of bitstrings of length $L=\Theta(\log n)$.\footnote{As long as $\log{w}\leq n^{2-\Omega(1)}$, we can perform universe reduction in time $\leq n^{2-\Omega(1)}$ by \cref{lem:evalTimeHashing}(c).}\todo{universe reduction not necessary as long as $\log{w}\leq n^{2-\Omega(1)}$}}

\begin{theorem}
  \label{thm:reductionSetDisjointness}
  Assume \threeXOR{} requires expected time $\Omega(n^2/f(n))$ for $f(n)=n^{o(1)}$ on a \wordRAM{}.
  Then for $0<\gamma<1$ every algorithm for \offSetDisjointness{} that works on instances with
    $\card{C} = \Theta(n^{2-2\gamma})$,
    $\card{A} = \card{B} = \Theta(n\log{n})$,
    $\card{S} = \bigO(n^{1-\gamma})$ for all $S\in A\cup B$ and
    $q = \Theta(n^{1+\gamma}\log{n})$
  requires expected time $\Omega(n^2/f(n))$.
\end{theorem}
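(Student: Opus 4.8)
The plan is to encode a \threeXOR{} instance $X = \{x_1, \dots, x_n\} \subseteq \{0,1\}^L$ (with $L = \Theta(\log n)$ after universe reduction via \cref{lem:evalTimeHashing}(c) and \cref{rem:wordLength}) as an \offSetDisjointness{} instance by splitting each string into a high part and a low part. Fix the split parameter so that the high part consists of roughly $2\gamma L$ bits and the low part of roughly $(1-2\gamma)L$ bits; write $x = (x^{\mathrm{hi}}, x^{\mathrm{lo}})$ accordingly. A triple $a \xor b = c$ then decomposes into $a^{\mathrm{hi}} \xor b^{\mathrm{hi}} = c^{\mathrm{hi}}$ and $a^{\mathrm{lo}} \xor b^{\mathrm{lo}} = c^{\mathrm{lo}}$. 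The first would suggest letting $C = \{0,1\}^{2\gamma L}$, which has size $\Theta(n^{2\gamma})$ — but we actually want $\card{C} = \Theta(n^{2-2\gamma})$, so instead we let $C$ index the \emph{pairs} $(u, z)$ where $u \in \{0,1\}^{2\gamma L}$ is a possible value of the high part and $z \in \{0,1\}^{(1-2\gamma) L}$ is a possible value of a low part; that gives $\card{C} = 2^{L} = \Theta(n)$. Hmm — this still does not match. The right move is: let $C = \{0,1\}^{2(1-\gamma)\log n}$, i.e., $C$ is indexed by pairs $(c^{\mathrm{hi}}, \delta)$ where $c^{\mathrm{hi}}$ ranges over the $\Theta(n^{1-\gamma})$-sized set of high parts and $\delta$ ranges over an auxiliary set of size $\Theta(n^{1-\gamma})$ used to record low-part information, so $\card{C} = \Theta(n^{2-2\gamma})$.

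Concretely, for each ordered pair $(x_i, x_j) \in X^2$ I would create a set in $A$ indexed by $i$ — actually, to get $\card{A} = \card{B} = \Theta(n \log n)$ rather than $\Theta(n^2)$, I instead group the pairs. Partition $X$ into $\Theta(n^\gamma)$ blocks of $\Theta(n^{1-\gamma})$ elements each. For a block $P$ and an element $a \in X$, let $S_{a,P} \in A$ collect, over all $b \in P$, the ``target'' codes $(a^{\mathrm{hi}} \xor b^{\mathrm{hi}},\ \mathrm{enc}(a^{\mathrm{lo}} \xor b^{\mathrm{lo}}, \text{block index of } b))$; this set has size $\bigO(n^{1-\gamma})$. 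For a block $Q$ and element $c \in X$, let $S'_{c,Q} \in B$ collect, over all $b \in Q$, the codes $(c^{\mathrm{hi}},\ \mathrm{enc}(c^{\mathrm{lo}}, \text{block index of }b))$, again of size $\bigO(n^{1-\gamma})$. There are $n \cdot \Theta(n^\gamma) = \Theta(n^{1+\gamma})$ such sets — too many by an $n^{\gamma}/\log n$ factor, so I would instead index the $A$-sets and $B$-sets by $a$ together with a $\Theta(\log n)$-sized refinement rather than a full block, chosen so the total count is $\Theta(n\log n)$ while each set stays of size $\bigO(n^{1-\gamma})$. I then form the $q = \Theta(n^{1+\gamma}\log n)$ query pairs $(S, S')$ so that every triple $(a,b,c) \in X^3$ is ``covered'' by exactly one query pair, and $S \cap S' \neq \emptyset$ witnesses exactly the existence of a \threeXOR{} solution inside that pair's scope.

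The correctness direction is: a common element of $S_{a,\cdot}$ and $S'_{c,\cdot}$ is precisely a $b$ with $a^{\mathrm{hi}} \xor b^{\mathrm{hi}} = c^{\mathrm{hi}}$ and $a^{\mathrm{lo}} \xor b^{\mathrm{lo}} = c^{\mathrm{lo}}$, i.e.\ $a \xor b = c$; conversely any \threeXOR{} solution produces such a common element in the unique query pair covering it. So solving \offSetDisjointness{} on this instance solves \threeXOR. For the running-time bookkeeping: constructing the instance costs $\bigO(n^{1+\gamma}\log n \cdot n^{1-\gamma}) = \bigO(n^2 \log n)$ word operations, which is $n^{2+o(1)}$ — this is a problem, since we need the reduction to run in time $n^{2 - \Omega(1)}$. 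The fix is the standard one from \cite{KPP16}: one does \emph{not} materialize all pairs; instead the elements of each $S$ and $S'$ are described succinctly, and the reduction outputs a compact encoding in time $\bigO(\card{A} \cdot \max\card{S} + q) = \bigO(n^2/\mathrm{polylog})$ — still not $n^{2-\Omega(1)}$.

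So the real structure of the proof must avoid covering all $n^3$ triples with small sets; instead, following \cite{KPP16}, one uses a \emph{random} partition of $X$ and argues that with the stated parameters, each of the $q$ query pairs has intersection of expected size $\bigO(1)$, so \offSetDisjointness{} (which only reports emptiness, not the intersection) suffices, and the reduction runs in time $\bigO(q + \card{A}\cdot n^{1-\gamma}) = \bigO(n^{1+\gamma}\log n + n^{1+\gamma}\log n \cdot n^{1-\gamma}) = \bigO(n^{2}\log n)$ — and to get below $n^2$ one instead observes that only the \emph{output size} and the number of query pairs matter for the conditional lower bound, not the naive construction cost, because the construction cost is charged against the $n^{2-o(1)}$ lower bound itself. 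I would therefore: (i) describe the random block partition and the succinct set encoding; (ii) verify the four parameter bounds $\card C = \Theta(n^{2-2\gamma})$, $\card A = \card B = \Theta(n\log n)$, $\max\card S = \bigO(n^{1-\gamma})$, $q = \Theta(n^{1+\gamma}\log n)$ by direct counting; (iii) prove the covering/correctness equivalence above; (iv) argue that a hypothetical $o(n^2/f(n))$ algorithm for \offSetDisjointness{} on such instances, composed with the reduction, would solve \threeXOR{} in expected time $o(n^2/f(n)) + \widetilde\bigO(n^{1+\gamma} + \text{construction})$, contradicting the assumed bound — the construction cost being absorbed because the reduction is run only when $\gamma$ is bounded away from $1$, keeping $n^{1+\gamma}$ subquadratic, and the per-pair work being $n^{1-\gamma}$ so the product is $n^2\log n$, which one balances by choosing the number of query groups more carefully. \textbf{The main obstacle} is exactly this bookkeeping: making the instance size parameters hit $\Theta(n^{2-2\gamma})$, $\Theta(n\log n)$, $\bigO(n^{1-\gamma})$, $\Theta(n^{1+\gamma}\log n)$ \emph{simultaneously} while keeping the reduction itself strictly subquadratic; this is the same delicate balancing act as in \cite{KPP16}, and the clean way to handle it is to introduce a second ``dummy'' coordinate on $C$ of size $\Theta(n^{1-\gamma})$ whose sole purpose is to inflate $\card C$ to the target without affecting correctness, and to let the $\log n$ factors in $\card A, \card B, q$ come from a $\Theta(\log n)$-way random splitting of the middle element $b$ over small sub-blocks.
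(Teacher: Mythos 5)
There is a genuine gap: you never arrive at the construction that makes all four parameter bounds hold simultaneously, and the devices you propose in its place (a deterministic high/low bit split, a ``dummy'' coordinate to inflate $\card{C}$, a $\Theta(\log n)$-way sub-blocking of the middle element $b$) are not the right ones. The paper's reduction first buckets $X$ by a random linear $h_1$ into $R=\Theta(n^\gamma)$ buckets, disposes of the expected $O(R)$ elements of overfull buckets by brute force in time $O(n^{1+\gamma}\log n)=n^{2-\Omega(1)}$ (this is what guarantees $\card{S}=O(n^{1-\gamma})$ --- you omit this step), and then uses two independent linear fingerprints $h_{21},h_{22}\colon\{0,1\}^w\to\{0,1\}^p$ with $2^{2p}=\Theta((n/R)^2)=\Theta(n^{2-2\gamma})$ to build \emph{shifted buckets} $X_{u,v}^{\uparrow}=\{(h_{21}(a)\oplus v,\,h_{22}(a))\mid a\in X_u\}$ and $X_{u,v}^{\downarrow}=\{(h_{21}(b),\,h_{22}(b)\oplus v)\mid b\in X_u\}$ over the ground set $C=\{0,1\}^{2p}$. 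The query for $(c,u)$ is the pair $(X_{u,h_{21}(c)}^{\uparrow},\,X_{u\oplus h_1(c),h_{22}(c)}^{\downarrow})$; by linearity a common element is exactly a pair $(a,b)\in X_u\times X_{u\oplus h_1(c)}$ with $h_{21}(a\oplus b)=h_{21}(c)$ and $h_{22}(a\oplus b)=h_{22}(c)$. This single construction yields $\card{C}=\Theta(n^{2-2\gamma})$, $R\cdot 2^p=\Theta(n)$ sets per fingerprint choice each of size $O(n^{1-\gamma})$, and $Rn=\Theta(n^{1+\gamma})$ queries per choice; there is no need for the block partitions or auxiliary coordinates you introduce.

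Two further concrete problems. First, the $\log n$ factors in $\card{A},\card{B},q$ do not come from splitting $b$ over sub-blocks; they come from repeating the fingerprinting with $K=\Theta(\log n)$ independent choices of $h_2$ and declaring a query positive only if all $K$ copies agree. This is essential: a single fingerprint level leaves an expected $n^3/(R\cdot 2^{2p})=\Theta(n^{1+\gamma})$ colliding triples, and since \SetDisjointness{} reports only non-emptiness, each false positive forces an $\Omega(n^{1-\gamma})$-time verification of the actual buckets, for a total of $\Omega(n^2)$ --- the reduction would not be subquadratic. With $K$ repetitions the probability that a fixed non-solution triple survives is $2^{-2pK}=n^{-\Theta(\log n)}$, so false positives vanish. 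Second, your running-time accounting conflates $q\cdot\max\card{S}$ with the cost of building the instance: the sets are materialized once, $R\cdot 2^p\cdot K=\Theta(n\log n)$ of them at cost $O(n^{1-\gamma})$ each, i.e.\ $O(n^{2-\gamma}\log n)=n^{2-\Omega(1)}$ since $\gamma>0$, and the $q$ queries merely reference already-built sets. Your attempted fixes (charging construction against the lower bound, balancing query groups) are not needed once this is counted correctly.
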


\begin{theorem}
  \label{thm:reductionSetIntersection}
  Assume \threeXOR{} requires expected time $\Omega(n^2/f(n))$ for $f(n)=n^{o(1)}$ on a \wordRAM{}.
  Then for $0\leq\gamma<1$ and $\delta>0$, every algorithm for \offSetIntersection{} which works on instances with
  $\card{C} = \Theta(n^{1+\delta-\gamma})$,
  $\card{A} = \card{B} = \Theta(\sqrt{n^{1+\delta+\gamma}})$,
  $\card{S} = \bigO(n^{1-\gamma})$ for all $S\in A\cup B$,
  $q = \Theta(n^{1+\gamma})$ and
  expected output size $\bigO(n^{2-\delta})$
  requires expected time $\Omega(n^2/f(n))$.
\end{theorem}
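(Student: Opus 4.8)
The plan is to adapt the \threeSUM{}-to-\offSetIntersection{} reduction of~\cite{KPP16}, exploiting that over $\mathbb{Z}_2$ the hashing steps are genuinely linear (no ``almost linear'' slack), and reusing \cref{lem:evalTimeHashing}, \cref{cor:hashingLemma} and the hash table of \cref{rem:hashTable} exactly as in \Cref{sec:subquadraticRandomized}. Fix constants $\gamma\in[0,1)$ and $\delta>0$ (with $\delta$ small enough that the instance built below can be written down in time $n^{2-\Omega(1)}$). Given a \threeXOR{} instance $X\subseteq\{0,1\}^{w}$ with $\card X=n$, I would first perform \emph{universe reduction} (as in~\cite{JV16}): draw $h\in\hashFamily^{\text{lin}}_{w,L}$ with $2^{L}=\Theta(n^{1+\delta})$, replace $X$ by $Y=\{h(x):x\in X\}\subseteq\{0,1\}^{L}$, and keep a dictionary for $X$ (and for the preimages of $h$). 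Linearity preserves every genuine solution $a\xor b=c$; for a non-solution the probability that $h(a)\xor h(b)=h(c)$ equals $2^{-L}$, so the expected number of \emph{spurious} triples (those with $h(a)\xor h(b)=h(c)$ but $a\xor b\neq c$) is at most $n^{3}2^{-L}=\bigO(n^{2-\delta})$. This is precisely the promised bound on the output size, and it also bounds by $\bigO(n^{2-\delta})=o(n^{2}/f(n))$ the cost of verifying against $X$ each element the \offSetIntersection{} oracle returns, answering YES the moment a verified triple turns out genuine; note that, unlike the \SetDisjointness{} variant in \cref{thm:reductionSetDisjointness}, no $\bigO(\log n)$-fold repetition is needed here because the witnesses returned by \offSetIntersection{} can be checked directly.

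Next I would \emph{bucket} $Y$: split off $r=\log R$ of the $L$ coordinates by a random $1$-universal linear map $h_{1}\in\hashFamily^{\text{lin}}_{L,r}$ with $R=\Theta(n^{\gamma})$, resampled $\bigO(1)$ times in expectation until $h_{1}$ is surjective and (by \cref{cor:hashingLemma} and Markov) fewer than $2R$ keys lie in \emph{bad} buckets, i.e.\ buckets of size $\ge 3n/R$. Via a fixed linear complement of $h_{1}$, the remaining $L-r$ coordinates give a faithful, $\xor$-compatible within-bucket label $y^{\mathrm{lo}}\in\{0,1\}^{L-r}$, which I split once more as $y^{\mathrm{lo}}=(y^{(1)},y^{(2)})$ with $y^{(1)}\in C_{1}$, $y^{(2)}\in C_{2}$ and $\card{C_{1}}=\card{C_{2}}=\Theta(\sqrt{n^{1+\delta-\gamma}})$. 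The ground set of the \offSetIntersection{} instance is $C=C_{1}\times C_{2}$, so $\card C=2^{L-r}=\Theta(n^{1+\delta-\gamma})$. A genuine solution satisfies $h_{1}(a)\xor h_{1}(b)=h_{1}(c)$ and $a^{(i)}\xor b^{(i)}=c^{(i)}$ for $i\in\{1,2\}$; in particular, $a$ in bucket $u$ and $b$ in bucket $v$ force $c$ into bucket $u\xor v$. Every triple containing a bad key is found directly by testing $z\xor x\in X$ through the dictionary for each bad key $z$ and each $x\in X$, in total time $\bigO(nR)=\bigO(n^{1+\gamma})=o(n^{2}/f(n))$.

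For the remaining (all-good) triples I build the instance as follows. For every good bucket $v$ and every $s\in C_{1}$ put into $A$ the set $A_{v,s}=\{\,(s\xor b^{(1)},\,b^{(2)}):b\in Y_{v}\,\}\subseteq C$; for every good bucket $t$ and every $s\in C_{2}$ put into $B$ the set $B_{t,s}=\{\,(c^{(1)},\,s\xor c^{(2)}):c\in Y_{t}\,\}\subseteq C$. Each such set has size at most $3n/R=\bigO(n^{1-\gamma})$, and since $\Theta(R)$ buckets are good there are $\card A=\card B=\Theta(R\sqrt{\card C})=\Theta(\sqrt{n^{1+\delta+\gamma}})$ of them. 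For each good key $a\in Y$, lying in some bucket $u=h_{1}(a)$, and each bucket $v$ with both $v$ and $u\xor v$ good, I issue the query $(A_{v,\,a^{(1)}},\,B_{u\xor v,\,a^{(2)}})$, tagged with $a$ and $v$; this yields $q=\Theta(nR)=\Theta(n^{1+\gamma})$ queries.

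Matching the two coordinates of $C=C_{1}\times C_{2}$ to the two conditions above, one checks that $A_{v,a^{(1)}}$ and $B_{u\xor v,a^{(2)}}$ share a point iff there are $b\in Y_{v}$ and $c\in Y_{u\xor v}$ with $a^{(1)}\xor b^{(1)}=c^{(1)}$ and $a^{(2)}\xor b^{(2)}=c^{(2)}$, i.e.\ iff $(a,b,c)$ is a possibly spurious $\threeXOR$-triple through the bucket triple $(u,v,u\xor v)$; conversely, every all-good genuine triple is detected by some query. A returned point, together with the tag $(a,v)$ of its query and the bucket $u\xor v$, decodes to such a triple, which we verify against $X$ in $\bigO(1)$ time via the dictionary, reporting YES iff some verified triple is genuine. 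Everything outside the oracle call costs $o(n^{2}/f(n))$ (hashing: $\bigO(n\log n)$ by \cref{lem:evalTimeHashing}(c) and \cref{rem:wordLength}; writing the instance: $\bigO((\card A+\card B)\cdot n^{1-\gamma}+q)$; verification: $\bigO(n^{2-\delta})$ in expectation; bad triples: $\bigO(n^{1+\gamma})$), so an $o(n^{2}/f(n))$-time algorithm for these \offSetIntersection{} instances would contradict the assumed \threeXOR{} lower bound. The step I expect to need the most care is making the size constraints $\card A=\card B=\Theta(\sqrt{n^{1+\delta+\gamma}})$, $q=\Theta(n^{1+\gamma})$ and $\card S=\bigO(n^{1-\gamma})$ hold \emph{simultaneously} — i.e.\ choosing the split $C=C_{1}\times C_{2}$ and the index sets so that $A$ and $B$ each have ``$\sqrt{\card C}$ sets per bucket'' while every query still pins down a single bucket triple — since the rest is a faithful translation of the \threeSUM{} argument with modular addition replaced by $\xor$; one should also recheck that the output-size promise really holds in expectation over the reduction's randomness (the spurious triples number $\bigO(n^{2-\delta})$ in expectation, and a genuine solution, if present, lets us halt after the first reported element).
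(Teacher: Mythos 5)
Your construction is essentially the paper's reduction: bucket by a linear $h_1$ into $R=\Theta(n^\gamma)$ buckets, dispose of bad elements by direct dictionary lookups, build two families of ``shifted'' buckets indexed by (bucket, shift in one half of the fingerprint space), and issue one query per (element, bucket) pair; the parameters $\card{C}$, $\card{A}=\card{B}$, $\card{S}$, $q$ all come out the same. Whether the fingerprint is obtained by explicit universe reduction plus a coordinate split (your version) or by two independent linear maps $h_{21},h_{22}$ applied to the original keys (the paper's version) is immaterial.

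There is, however, one genuine gap, precisely at the point you flag as needing a ``recheck'': the expected-output-size promise. Your accounting attributes the output of the \offSetIntersection{} instance only to spurious (colliding) triples, of which there are indeed $\bigO(n^{2-\delta})$ in expectation, and you argue that a genuine solution ``lets us halt after the first reported element.'' But halting early only controls the cost of \emph{your} verification loop; it does not change the instance you hand to the hypothesized algorithm, whose guarantee applies only to instances with expected output size $\bigO(n^{2-\delta})$, and which by definition must list \emph{all} intersection elements. Genuine triples also contribute to that output, one element per ordered triple, and there can be $\Theta(n^2)$ of them (e.g.\ when $X$ is close to a linear subspace), so your instance can violate the promise by a polynomial factor and the conditional lower bound no longer transfers. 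The paper closes this hole with a preprocessing step absent from your proposal: before building the instance, sample $\Theta(n^{\delta}\log n)$ random pairs $(a,b)\in X^2$ and test $a\xor b\in X$ via the dictionary. If the number of genuine ordered triples is $T$, the probability that this sampling fails to find a solution is roughly $e^{-\Omega(T\log n/ n^{2-\delta})}$, so the residual contribution $T\cdot\Pr[\text{sampling fails}]$ of genuine triples to the expected output size is $\bigO(n^{2-\delta})$ for every $T$, restoring the promise while costing only $\bigO(n^{\delta}\log n)=n^{2-\Omega(1)}$ extra time (using $\delta<1+\gamma$, which you already need for the instance to be writable in subquadratic time). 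With that step added, your argument goes through.
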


\begin{proof}
  \label{prf:reductionSet}
  (For more details, see \cite[ch. 6]{PS16}.)
  Let $X\subseteq\{0, 1\}^{w}$ be the given \threeXOR{} instance.
  As in \cref{sec:subquadraticRandomized}, we use two levels of hashing.
  \Cref{algo:setDisjointness,algo:setIntersection} in \cref{subsec:pseudocode} illustrate the reduction to \offSetDisjointness{} and \offSetIntersection, respectively.
  
  At first, we hash the elements of $X$ with a randomly chosen hash function $h_1\in\hashFamily^{\text{lin}}_{w, r}$ into $R = 2^{r} = \Theta(n^\gamma)$ buckets in time $\bigO(n\log n)$.
  Then, we apply \cref{cor:hashingLemma}: There are expected $\bigO(R) = \bigO(n^{\gamma})$ elements in buckets with more than three times their expected size.
  For each such bad element, we can naively check in time $\bigO(n\log{n})$ whether it is part of a triple $(a, b, c)$ with $a\xor b=c$ or not.
  Since $\gamma<1$, all bad elements can be checked in expected time $\leq n^{2-\Omega(1)}$.
  Therefore, we can assume that every bucket $\bucket{X}{u}$, $u\in\{0, 1\}^r$, has $\leq 3\frac{n}{R} = \bigO(n^{1-\gamma})$ elements.
  
  The second level of hashing uses two independently and randomly chosen hash functions $h_{21}, h_{22}\in\hashFamily^{\text{lin}}_{w, p}$ where $P=2^{2p}=(5n/R)^2=\bigO(n^{2-2\gamma})$ for \offSetDisjointness{} and $P=2^{2p}=n^{1+\delta}/R=\bigO(n^{1+\delta-\gamma})$ for \offSetIntersection.
  (The function $h_2$ with $h_2(x) = h_{21}(x)\circ h_{22}(x)$ is randomly chosen from a linear and 1-universal class $\hashFamily$ of hash functions $\{0, 1\}^w\to\{0, 1\}^{2p}$.)
  The hash values can be calculated in time $\bigO(n\log^2{n})$.
  (The additional $\log{n}$ factor is only necessary for \offSetDisjointness{}, since we need to use $\Theta(\log{n})$ choices of hash functions $h_2$ to get an error probability that is small enough.)
  For each $u\in\{0, 1\}^{r}$ and $v\in\{0, 1\}^{p}$, we create ``shifted'' buckets $\bucket{X}{u, v}^{\uparrow} = \condSet{h_2(x)\xor(v\circ 0^{p})}{x\in\bucket{X}{u}}$ and $\bucket{X}{u, v}^{\downarrow} = \condSet{h_2(x)\xor(0^{p}\circ v)}{x\in\bucket{X}{u}}$.
  One such set can be computed in time $\bigO(n^{1-\gamma})$.
  Therefore, all sets can be computed in time $\bigO(R\sqrt{P}\log{n}\cdot n^{1-\gamma}) = O(n^{2-\gamma}\log{n})$ for \offSetDisjointness{} and $\bigO(R\sqrt{P}\cdot n^{1-\gamma}) = O(n^{(3+\delta-\gamma)/2})$ for \offSetIntersection{}.
  
  We can show that for all $u\in\{0, 1\}^r$ and $c\in X$, if there are $a, b\in X$ such that $a\xor b=c$ and $a\in\bucket{X}{u}$, then $\bucket{X}{u, h_{21}(c)}^{\uparrow}\cap\bucket{X}{u\xor h_1(c), h_{22}(c)}^{\downarrow}\neq\emptyset$.
  Therefore, we create the following \offSetDisjointness{} (\offSetIntersection) instance:
    $C := \{0, 1\}^{2p}$,
    $A := \condSet{\bucket{X}{u, v}^{\uparrow}}{u\in\{0, 1\}^r, v\in\{0, 1\}^p}$,
    $B := \condSet{\bucket{X}{u, v}^{\downarrow}}{u\in\{0, 1\}^r, v\in\{0, 1\}^p}$ and
    $q$ queries $(\bucket{X}{u, h_{21}(c)}^{\uparrow}, \bucket{X}{u\xor h_1(c), h_{22}(c)}^{\downarrow})$ for all $u\in\{0, 1\}^r$ and $c\in X$ in time $\leq n^{2-\Omega(1)}$.
    (These are $R\cdot n=\Theta(n^{1+\gamma})$ queries for \offSetIntersection{}. For \offSetDisjointness{}, we create $R\cdot n$ queries for each of the $\Theta(\log{n})$ choices of $h_2$.)
  
  After the \offSetDisjointness{} or \offSetIntersection{} instance has been solved,
  we can use this answer to compute the answer for $X$ in expected time $\leq n^{2-\Omega(1)}$.
  We only have to check if a positive answer from \offSetDisjointness{} (a pair with non-empty intersection) or \offSetIntersection{} (an element of an intersection) yields a solution triple of $X$ or not.
  
  For \offSetDisjointness, we can show that the probability for a triple to yield a false positive can be made polynomially small if we consider $K=\Theta(\log{n})$ choices of $h_2$ and only examine $(X_{u}\xor c)\cap X_{h_1(c)\xor u}$ if this is suggested by all $K$ corresponding queries.
  For \offSetIntersection, the expected number of colliding triples is $\bigO(n^{2-\delta})$.
  By trying to guess a good triple $\Theta(n\log{n})$ times before creating the \offSetIntersection{} instance we can avoid a problem for the expected running time if a \threeXOR{} instance yields an \offSetIntersection{} instance with output size $\omega(n^{2-\delta})$.
  
  For all relevant values of $\gamma$ and $\delta$, the total running time is $\leq n^{2-\Omega(1)}$ in addition to the time needed to solve the \offSetDisjointness{} or \offSetIntersection{} instance.
\end{proof}


\section{Conclusions and Remarks}
\label{sec:conclusionsRemarks}

We have presented a simple deterministic algorithm with running time $\bigO(n^{2})$.
Its core is a version of the Patricia trie for $X\subseteq\{0, 1\}^{w}$, which makes it possible to traverse the set $a\xor X$ in ascending order for arbitrary $a\in \{0, 1\}^{w}$ in linear time.
Furthermore, our randomized algorithm solves the \threeXOR{} problem in expected time $\bigO(n^{2}\cdot\min\{\frac{\log^{3}{w}}{w}, \frac{(\log{\log{n}})^2}{\log^2{n}}\})$ for $w=\bigO(n\log{n})$, 
and $\bigO(n\log^{2}{n})$ for $n\log n\leq w=\bigO(2^{n\log n})$.
The crossover point between the $w$ and the $\log n$ factor is $w = (\log^2 n)\log\log n$.
The only difference to the running time of~\cite{BDP08} is in an 
extra factor $\log w$ in the word-length-dependent part.
This is due to the necessity to re-sort a word-packed array of size $\bigO(w/\log{w})$ in time $\bigO(\log^{2}{w})$ after we have $\XOR$-ed each of its elements with a (common) element.
Finally, we have reduced \threeXOR{} to \offSetDisjointness{} and \offSetIntersection, establishing conditional lower bounds (as in~\cite{KPP16} conditioned on the \intthreeSUM{} conjecture).

A simple, but important observation, which is used in apparently all deterministic subquadratic time algorithms for \threeSUM{}, is
\emph{Fredman's trick}: \[a+b<c+d \iff a-d<c-b\qquad\text{for all } a, b, c, d\in\mathbb{Z}\,.\]
Unfortunately, such a relation does not exist in our setting,
since there is no linear order $\prec$ on $\{0, 1\}^{w}$ such that $a\xor b\prec c\xor d \iff a\xor d\prec c\xor b$ holds for all $a, b, c, d\in\{0, 1\}^{w}$.
Since all elements are self-inverse,
for $a=b=c=0^{w}$ and any $d\in\{0, 1\}^{w}$, we would get $0^{w}\prec d\iff d\prec 0^{w}$.
Is there another, ``trivial-looking'' trick for \threeXOR{}, that establishes a basic approach to solve \threeXOR{} in deterministic subquadratic time?

Another open question is how the optimal running times for \threeSUM{} and \threeXOR{} are related.
At first sight, the two problems seem to be very similar, but the details make the difference.
The observations mentioned above (especially the problem of re-sorting slightly modified word-packed arrays and the possible absence of a relation like Fredman's trick) hint at a larger gap than expected.
On the other hand, the fact that both problems can be reduced to a wide variety of computational problems in a similar way (\eg{} listing triangles in a graph, \offSetDisjointness{} and \offSetIntersection) increases hope for a more concrete dependance.


\bibliography{references}


\appendix
\section{Appendix}

\subsection{Proof of a Hashing Lemma}
\label{subsec:app:hashingLemma}
We prove \cref{lem:hashingLemma} from~\Cref{subsec:hashingLemma}:

\HashingLemma*

\begin{proof}
  \label{prf:lem:hashingLemma}
	As probability space we use
	$\Omega=\{(h, x, y)  \mid h\in\hashFamily, x, y \in S, x\neq y\}$ with the uniform distribution. 
	Fix $t$ with $2\frac{n}{m} < t\leq n$. For $h \in \hashFamily$ we define
  two sets,
	\begin{align}
	\mathcal{B}_h' &=\{B_{h(x)} \mid  x \in S, \abs{B_{h(x)}} < t\} 
	     \text{ \ \ (the set of ``small'' nonempty $h$-buckets)}\,,\nonumber\\
   S_h' &=\{x\in S\mid \abs{B_{h(x)}} < t\} \text{ \ \ \qquad\quad (the set of keys in these $h$-buckets)}\,,\nonumber\\
	\intertext{and three quantities:}
	   \bar p_h &= \prob{x\in S}{\,\abs{B_{h(x)}}\ge t\,}\text{ \ (so $\abs{S_h'}=(1-\bar p_h)n$)}\,,\nonumber\\
     q_h &= \prob{x, y \in S, x\neq y}{h(x)=h(y)}\,,\label{eq:990}\\
		 q_h' &= \prob{x, y \in S_h', x\neq y}{h(x)=h(y)} = 
		\frac{1}{\abs{S_h'}(\abs{S_h'}-1)}\sum_{B\in\mathcal{B}_h'}\abs{B}(\abs{B}-1).\label{eq:1000}
     	\end{align}
	Since the function $z\mapsto z(z-1)$ is convex, the minimum value of the sum
	$\sum_{B\in\mathcal{B}_h'}a_{B}(a_{B}-1)$, 
	taken over all vectors $(a_B)_{B\in\mathcal{B}_h'}$ with nonnegative coefficients $a_B$ that sum to $\abs{S_h'}$,
	is $\abs{\mathcal{B}_h'}\cdot\frac{\abs{S_h'}}{\abs{\mathcal{B}_h'}}\cdot(\frac{\abs{S_h'}}{\abs{\mathcal{B}_h'}}-1)
	= \abs{S_h'}(\abs{S_h'}/\abs{\mathcal{B}_h'}-1)$. 
	Together with $\abs{\mathcal{B}_h'}\le m$ this allows us to conclude from~\eqref{eq:1000} that
	\begin{equation}\label{eq:1010}
	q_h' \ge  \frac{(1-\bar p_h)n/m-1}{\abs{S_h'}-1}\,.
  \end{equation}
	In~\eqref{eq:990}, we split the probability space according to $x\in S_h'$ and $x\notin S_h'$, to obtain: 
	\begin{align*}
	q_h &= \bar p_h \cdot \condProb{x,y \in S, x\neq y}{h(x)=h(y)}{x\in S - S_h'\,}\\ 
	    &\hspace*{12em} + (1-\bar p_h) \cdot \condProb{x,y \in S, x\neq y}{h(x)=h(y)}{x\in S_h'\,} \\
	    &\stackrel{\eqref{eq:1000}}{\ge} \bar p_h\cdot \frac{t-1}{n-1}  +  (1-\bar p_h)\cdot \frac{\abs{S_h'}-1}{n-1}q_h' \\
			&\stackrel{\eqref{eq:1010}}{\ge} \frac{\bar p_h (t - 2\frac{n}{m}) + \frac{n}{m} - 1}{n-1}\,. 
	\end{align*}
	Taking expectations and using 1-universality yields
	\[\frac1m \ge \prob{(h,x,y)\in\Omega}{h(x)=h(y)} =
	\E{h\in \hashFamily}{q_h} > \frac{\E{h\in\hashFamily}{\bar p_h} (t - 2\frac{n}{m}) + \frac{n}{m} - 1}{n}.\]
	Rearranging terms, we get
	\[
	\E{h\in\hashFamily}{\abs{S-S_h'}\,} = \E{h\in \hashFamily}{\bar p_h n} < \frac{n}{t - 2\frac{n}{m}}\,,
	\]
	which is the claimed inequality.
\end{proof}

\subsection{Set Intersection on Unsorted Word-Packed Arrays}
\label{subsec:subproblemsWordPackedArrays}

We prove \cref{lem:subproblemsWordPackedArrays} from~\Cref{subsec:setIntersectionWPA}:

\SetIntersectionWPA*

First, we describe word-parallel sorting. The basic approach is Batcher's bitonic sort. We follow~\cite{AH97}.%
{
  For simplicity of description, assume $k\ell\le w$ and $\ceil{\log{k}}< \ell$.
  Let $x_{0}, \dots, x_{k-1}$ be $k$ $(\ell-1)$-bit strings.
  The strings are stored in a word in such a way that each string is preceded by one extra bit, the \emph{test bit}.
  %
  %
  For convenience, we may even assume that $k$ is a power of two and that $ck\ell\le w$ for some constant $c\in\mathbb{N}_+$
  (use a constant number of words to simulate one longer word, if necessary).
  Thus, a word has $ck$ fields of $\ell$ bits (for the test bit and one entry). The given strings occupy the $k$ rightmost fields.
  Fields $k$, \dots, $ck-1$ serve as \emph{temporary storage}.
  \newlength{\testBitLength}
  \setlength{\testBitLength}{1em}
  \newlength{\numberLength}
  \setlength{\numberLength}{3em}
  \newlength{\numberAndTestBitLength}
  \setlength{\numberAndTestBitLength}{\testBitLength}
  \addtolength{\numberAndTestBitLength}{\numberLength}
  \newlength{\subwordLength}
  \setlength{\subwordLength}{\numberAndTestBitLength}
  \addtolength{\subwordLength}{0.5\subwordLength}
  \newcommand{\fieldFS}{\footnotesize}
  \setlength{\tabcolsep}{0pt}
  \begin{center}
    \begin{tabular}{|C{\subwordLength}|C{\testBitLength}:C{\numberLength}|C{\subwordLength}|C{\testBitLength}:C{\numberLength}|C{\testBitLength}:C{\numberLength}|C{\subwordLength}|C{\testBitLength}:C{\numberLength}|}
      \hline
      $000\dots 000$ & $0$ & $0\dots 0$ & $\dots$ & $0$ & $0\dots 0$ & $0$ & $x_{k-1}$ & $\dots$ & $0$ & $x_{0}$ \\
      \hline
      \multicolumn{1}{C{\subwordLength}}{\fieldFS } & 
      \multicolumn{2}{C{\numberAndTestBitLength}}{\fieldFS field $ck-1$} &
      \multicolumn{1}{C{\subwordLength}}{\fieldFS \dots} &
      \multicolumn{2}{C{\numberAndTestBitLength}}{\fieldFS field $k$} &
      \multicolumn{2}{C{\numberAndTestBitLength}}{\fieldFS field $k-1$} &
      \multicolumn{1}{C{\subwordLength}}{\fieldFS \dots} &
      \multicolumn{2}{C{\numberAndTestBitLength}}{\fieldFS field $0$} \\
    \end{tabular}
  \end{center}
}

Let us assume we have packed $k$ numbers $a_{0}, \dots, a_{k-1}\in\{0, 1\}^{\ell-1}$ 
into one word-packed array $a$.
We want to simulate Batcher's \emph{bitonic sort} sorting network to sort these numbers in time $\bigO(\log^{2}{k})$.
If $1\leq g\leq k$ is a power of $2$, we can split $a$ into $\frac{k}{g}$ groups of size $g$ each.
Using the techniques of~\cite[sec. 3]{AH97} (including the use of some constants, which depend on $w$, $k$, and $\ell$
and which can be constructed in time $\bigO(\log w)$), we can solve the following problems:
\begin{itemize}
  \item We can reverse the order of the elements in every group
  in time $\bigO(\log{g})$.
  \item If $g<k$, there is an even number of groups, and
  we can reverse the order of the elements in every second group (with odd (or even) index)
  in time $\bigO(\log{g})$.
  \item If $g>1$, and each group is bitonic,
  we can rearrange the elements in each group in such a way,
  that all the first $g/2$ elements are smaller 
  than all the second $g/2$ elements and both the first and the second $g/2$ elements 
  form a bitonic group of size $g/2$, in \red{time $\bigO(1)$}.
  \item If each group is bitonic,
  we can rearrange the elements in each group so
  that the resulting groups are sorted in increasing order
  in time $\bigO(\log{g})$.
  \item If $g<k$, and each group is sorted ascendingly,
  we can merge the elements of two neighbouring groups (groups $i$ and $i+1$ for $0\leq i\leq\frac{k}{g}-2$)
  in \red{time $\bigO(\log{g})$}.
  \item We can sort the elements in $a$ in increasing order
  in \red{time $\bigO(\log^{2}{k})$}.
\end{itemize}
The sorted word-packed array has its smallest element in field $0$ and its largest element in field $k-1$.

Now, we can check whether two word-packed arrays have a common element in \red{time $\bigO(\log^{2}{k})$}.
Let us assume we have packed a set $A$ of $k$ strings from $\{0, 1\}^{\ell-1}$ 
into the rightmost $k$ fields of one word-packed array $a$ 
and a set $B$ of $k$ strings from $\{0, 1\}^{\ell-1}$ 
into one word-packed array $b$. (There may be some \emph{dummy elements}, \ie, duplicates of elements in $A$ resp. $B$, to reach size $k$.)
We assume $w\geq 4k\ell$. 

With each element, we associate a special \emph{marker bit}, set to $0$ for each element $a\in A$,
and to $1$ for each element $b\in B$.
The marker bit pair is located in the corresponding temporary storage.
We concatenate the two word-packed arrays, resulting in one word $c$ with $2k$ fields and marker bits, which is then
sorted in time $\bigO(\log^{2}{k})$.
(Whenever two fields are swapped, the corresponding fields containing the marker bits are swapped, too.)

It remains to check whether two consecutive fields contain the same value and the 
corresponding marker bits are $0$ and $1$.
For this, we shift $c$ by $\ell$ bits to the right, 
followed by a bitwise $\XOR$ operation with $c$ itself, to get a new word-packed array $c'$.
Then the following statements are equivalent:
(a) $A\cap B\neq\emptyset$,
(b) there are two consecutive elements in $c$ with the same value and marker bit pairs $0$ and $1$, and
(c) $0^{\ell-1}$ is an element of $c'$ with marker bit $1$.
For the two final steps, we sort $c'$ in time $\bigO(\log^{2}{k})$,
treating the marker bit of an element as its least significant bit.
After that, we perform a binary search in time $\bigO(\log{k})$ to check 
whether there are some elements $0^{\ell-1}\circ 1$, \ie{} with value $0^{\ell-1}$ and marker bit $1$.

We can even list $t$ corresponding pairs of elements $(a, b) \in A\times B$ (or their indices) in time $\bigO(\log^{2}{k}+t)$:
For this purpose, for each element in $c$, we additionally attach its corresponding index in $a$ (or $b$) to it 
(in the temporary storage corresponding to its field; we need $\ceil{\log{k}}$ bits per entry) 
as a unique identifier.
The word-packed array $c'$ is modified in the same way.
If we carry this information along through the steps above, 
especially during sorting, we are able to identify all pairs of equal elements 
(of $a$ and $b$).%
\footnote{For each element $0^{\ell-1}\circ 1$ in $c'$ we get one pair of elements in $c$ at positions $i$ and $i+1$ 
  (and the corresponding positions in $a$ and $b$ can be identified in the same way). Due to potential collisions, 
  we have to check if $c$ contains more copies of this common element, and therefore if there are 
  more pairs of elements in $a$ and $b$ with this value. Since $c$ is sorted, 
  these elements have to be directly before position $i$ (for elements from $a$) 
  and directly after position $i+1$ (for elements from $b$).}

\section{A Subquadratic Randomized Algorithm}
\label{sec:app:subquadraticRandomized}

We give a more detailed proof of \cref{thm:subquadraticRandomizedShortWords} from~\Cref{sec:subquadraticRandomized}:

\SubquadraticRandomized*

As mentioned before, for $w=\omega(n\log n)$, we proceed as for $w=\Theta(n\log n)$.

\ignore{\subsection{Universe Reduction}
\label{subsec:UniverseReduction}
As soon as randomization is allowed, \emph{universe reduction} allows us to 
``normalize'' the length of the input strings. It is essential that
we have $\mathbb{Z}_2$-\emph{linear} 1-universal hash functions at our disposal,
which is the case by~\cref{subsec:LinearUniversalHashing}. 
We proceed as follows. 

Assume the input set $X$ contains 
elements from $\{0,1\}^{w}$, and (\wlogMy) that $w > 6\log n$.
We let $L =  \ceil{6\log n}$ and use a hash function $h$ chosen at random from a 1-universal 
hash family like $\hashFamily^{\text{lin}}_{w,L}$.
If $a, b, c\in X$ are such that $a\xor b = c$,
then by linearity we have $h(a)\xor h(b) = h(c)$. In the other direction,
the probability that $h(a)\xor h(b) = h(c)$ although $a\xor b \not= c$
is $2^{-L} \leq n^{-6}$, by 1-universality. 
The expected number of triples $(a,b,c)$ 
with such a configuration
and hence the probability that there is such a triple  
is at most $n^3\cdot n^{-6} = n^{-3}$.
Thus we can proceed as follows: Form $X'=\{h(x)\mid x\in X\}$,
solve the \threeXOR{} problem for $X'$ and 
check triples $(a,b,c)$ with $h(a)\xor h(b) = h(c)$ if they satisfy
$a\xor b = c$.\todo{check the reported triple, repeat if bad}{} The increase in expected running time due to
false collisions is only $\bigO(1)$.
By \cref{lem:evalTimeHashing}(c), the additional time needed for calculating the hash values is
$\bigO(n\log n + \log w)$. 
Universe reduction allows us to assume {\wlogMy} that 
the elements of $X$ have at most $L=\ceil{6\log n}$ bits.

Whenever $w = \Omega(n \log n)$, the whole (reduced) input fits in a constant number of words,
and we can certainly carry out all computations
that would be possible with word length $w = n\log n$ essentially without slowdown.
So let us assume $w = O(n\log n)$ from here on.
Time bounds for $w=\omega(n\log n)$ are obtained by adding 
the cost of universe reduction (which is $\bigO(n\log n + \log w)$)
to the cost for \threeXOR{} for word length $n\log n$ (which is $\bigO(n\log^2 n)$). 

\begin{remark}\label{rem:randomizedTrivialAlgorithm}
  The observation that we may shorten the keys to logarithmic length
  by a randomly chosen hash function and the fact that strings of length $O(\log n)$ can be radix-sorted in linear time immediately leads to
  a randomized algorithm for \threeXOR{} with expected running time $\bigO(n^2)$.
  The simple approach for \threeSUM{} as sketched at the beginning of
  \Cref{sec:quadraticDeterministic} carries over, excepting that for each $a\in X$
  one sorts $a\xor X$ in linear time.
\end{remark}}

\ignore{\begin{proof}
    \label{prf:thm:subquadraticRandomizedShortWords}}
  
  \ignore{The first step is to apply universe reduction, so that at the cost of $\bigO(n\log n)$
  we may assume that all $n$ input strings have bitlength $L=\ceil{6\log n}$.}
  
  \subsection{Buckets and Fingerprints}
  \label{subsec:BucketsFingerprints}
  
  We begin by sorting the sets $X\subseteq \{0,1\}^w$ 
  into ascending lexicographic order in time $\bigO(n\log n)$.
  
  Let $R=2^r$ for some $r$.
  For convenience, we identify the sets $[R]$ (integers) and $\{0,1\}^r$ (strings).
  (The value of $r$ will be specified later; we will have $R=o(n)$, hence $r < \log n$.) 
  Now, we choose a hash function 
  $h_1\colon U \to [R]$ from $\hashFamily_{1} = \hashFamily^{\text{lin}}_{w,r}$
  (see~\Cref{subsec:LinearUniversalHashing}). 	
  Function $h_1$ is applied to the elements of $X$. 
  This splits the set into $R$ buckets.
  We write $\bucket{X}{u} = \condSet{x\in X}{h_1(x)=u}$, for $u\in[R]$. 
  The hash values are calculated once and for all and stored for further use.
  Calculating the hash values and the buckets takes time 
  $\bigO(nr)=\bigO(n\log n)$, by~\Cref{lem:evalTimeHashing}(c), 
  using that $r < \log n \le w$. 
  For $a\in X$, the expected size of bucket $\bucket{X}{h_1(a)}$ is $n/R$.
  Since $X$ was sorted, we can assume 
  that each bucket is sorted as well. 
  
  Let $\bad{X}\subseteq X$ be all elements of $X$ in \emph{bad} buckets, 
  \ie, buckets of size larger than $3\frac{n}{R}$, and let $\good{X}=X\setminus\bad{X}$ 
  be all elements in \emph{good} buckets, \ie, buckets of size at most $3\frac{n}{R}$.
  Clearly $\card{\good{X}} \leq n$.
  By \cref{cor:hashingLemma}, we have $\E{h_1}{\card{\bad{X}}} < R$, and 
  by Markov's inequality $\prob{}{\card{\bad{X}} \geq 2R}<\frac12$.
  In the algorithm we check whether $\card{\bad{X}} < 2R$ occurs.
  If not, we start all over by choosing a new hash function $h_1$.
  This maneuver increases the expected running time by at most a constant factor. 
  From here on we can assume that $\card{\bad{X}}<2R$.
  
  Let $a\in X$ and $b\in\bucket{X}{u}$ for $u=h_1(b)$.
  If there is an element $c\in X$ such that $a\xor b=c$, 
  then linearity of $h_1$ implies $h_{1}(c) = h_{1}(a\xor b) = h_{1}(a) \xor h_{1}(b) = h_{1}(a) \xor u$,
  or $a\xor b\in\bucket{X}{h_{1}(a)\xor u}$.
  
  As in~\cite{BDP08}, a second level of hashing inside each bucket is used
  to replace elements by shorter \emph{fingerprints}. 
  If these are short enough, we can pack all fingerprints 
  from a (good) bucket with at most $3{n}/{R}$ elements into one word
  while ensuring a small error probability, \ie{}, a small expected number 
  of \emph{colliding triples} $(a, b, c)\in X^3$ with $a\xor b \neq c$, 
  but $h_{1}(a\xor b)=h_{1}(c)$ and $h_{2}(a\xor b)=h_{2}(c)$.
  
  Let $p$ be the bitlength of the fingerprints and $P=2^{p}$.
  We intend to pack up to $3\frac{n}{R}$ elements into one $w$-bit word, including some additional space,
  so we choose $p=\bigO(w\cdot\frac{R}{n})$. (The constant will be determined below.)
  We pick a hash function $h_{2}$ from $\hashFamily_{2} = \hashFamily^{\text{lin}}_{w,p}$ uniformly at random in time $\bigO(1)$,
  hash all elements in all buckets, which takes time $\bigO(n\cdot p)$, by \cref{lem:evalTimeHashing}(c).
  The total time for all the hashing steps described so far is $\bigO(n\cdot(r+p))$.
  
  Next, we bound the expected number of colliding triples.
  Let $(a, b, c)\in X^3$ with $a\xor b\neq c$.
  Then
  \[\prob{h_{1}, h_{2}}{h_{1}(a\xor b)=h_{1}(c)\wedge h_{2}(a\xor b)=h_{2}(c)} 
  \leq \frac{1}{R}\cdot\frac{1}{P} = \frac{1}{R\cdot 2^{p}}\,,\]
  since $\hashFamily_{1}$ and $\hashFamily_{2}$ are $1$-universal.
  Hence, the expected number of colliding triples is
  \[\sum_{\substack{a, b, c\in X\\a\xor b\neq c}}{\prob{h_{1}, h_{2}}{(a, b, c)\text{ collides}}} 
  \leq \sum_{\substack{a, b, c\in X\\a\xor b\neq c}}{\frac{1}{R\cdot 2^{p}}} \leq \frac{n^{3}}{R\cdot 2^{p}}\,.\]
  Since $\prob{h_{1}, h_{2}}{\card{\bad{X}} < 2R}>\frac{1}{2}$, the expected number of colliding triples conditioned on $\card{\bad{X}} < 2R$ is
  not larger than $2n^3/(R\cdot 2^{p})$.
  
  We consider two choices for $R$ and $p$, cf.~\cite[proof of Lemma 3]{BDP08} and \cite[proof of Thm. 2]{BDP08}.
  The first one is better for larger words of length $w=\Omega((\log^2{n})\log\log n)$ whereas the second one gives us better results for smaller words.
  In both cases, we search for triples with a fixed number of bad elements separately.
  The strategies for finding triples of good elements correspond to the approach for \intthreeSUM{} in~\cite{BDP08}.
  However, for triples with at least one bad element we have to rely on a more fine-grained examination than in~\cite{BDP08}.
  For this, we will use hash tables and another lookup table.
  
  \subsection{Long Words: Exploiting Word-Level Parallelism}
  \label{subsec:pclong}
  
  For word lengths $w=\Omega((\log^2 n)\log\log n)$, we choose $R = \ceil{6\cdot n\cdot(\log{w})/w}$ and $p = \floor{2\cdot\log{w}}$.
  Evaluating the two hash functions for all keys is done in expected time $\bigO(n(\log R+p)) = \bigO(n\log n)$.
  Then, we have $\bigO(n(\log{w})/w)$ good buckets of size $\bigO(w/\log{w})$ as well as $\bigO(n(\log{w})/w)$ bad elements.
  We are able to pack all fingerprints of elements of a good bucket into one word in time $\bigO(R+n)=\bigO(n)$.
  The packed representation of the fingerprints of a bucket $\bucket{X}{u}$ is called \emph{word-packed array} $\wpa{X}{u}$.
  Furthermore, the expected number of colliding triples (conditioned on $\card{\bad{X}} < 2R$) is bounded by
  $2n^{3} / (R\cdot 2^{p}) = \bigO(n^{2}/(w\log{w}))$.
  
  We examine triples with at most one and at least two bad elements separately, as seen in \cref{algo:subquadratic3XOR} in \cref{subsec:pseudocode}.
  
  \subparagraph{Triples with at most One Bad Element}
  \label{subpar:pclong:01bad}
  
  \Wlog, we examine all triples $(a, b, c)\in X^3$ where $b$ and $c$ are good.
  If $a\xor b = c$ and $h_{1}(b)=u$, 
  then $h_{1}(c) = h_{1}(a)\xor u$ and $h_{2}(c) = h_{2}(a) \xor h_{2}(b)$.
  Thus, fingerprint $h_{2}(c)$ occurs in the word-packed array $\wpa{X}{h_{1}(a)\xor u}$.
  It also occurs in $\wpa{X}{u}\xor (h_2(a), h_2(a), \dotsc, h_2(a))$ (each fingerprint in $\wpa{X}{u}$ has been $\XOR$-ed with $h_{2}(a)$).
  
  Hence,
  we run through all $a\in X$ and all $u\in[R]$.
  If $\bucket{X}{u}$ and the corresponding bucket $\bucket{X}{h_1(a)\xor u}$ are good, we search for elements $b\in \bucket{X}{u}$ and $c\in \bucket{X}{h_1(a)\xor u}$ with $a\xor b=c$.
  For this, we first apply $\xor h_2(a)$ to all fingerprints in $\wpa{X}{u}$.
  (This can be done in constant time if we have precalculated a suitable constant in time $\bigO(\log(n/R))=\bigO(\log w)$.)
  Then, we look for all pairs of equal fingerprints in $\wpa{X}{u}\xor (h_2(a), h_2(a), \dotsc, h_2(a))$ and $\wpa{X}{h_{1}(a)\xor u}$.
  If there are $t$ such pairs, we can list them in time 
  $\bigO(t+\log^{2}{\frac{n}{R}}) = \bigO(t+\log^{2}{\frac{w}{\log{w}}}) 
  = \bigO(t+\log^{2}{w})$, by \cref{lem:subproblemsWordPackedArrays}.
  Then, in time $\bigO(t)$, we check each of these $t$ pairs whether it derives from a non-colliding triple.
  Since we can stop after we found a non-colliding triple
  and since the expected total number of colliding triples is
  $\bigO(n^{2}/(w\log{w}))$,
  we are done in expected time $\bigO(n\cdot R\cdot\log^{2}{w} + n^{2}/(w\log{w})) = \bigO(n^{2}(\log^{3}{w})/w)$.
  
  \subparagraph{Triples with at least Two Bad Elements}
  \label{subpar:pclong:23bad}
  
  \Wlog, we examine all triples $(a, b, c)\in X^3$ where $b$ and $c$ are bad.
  Given $b, c\in\bad{X}$, we have to check if there is some $a\in X$ with $a\xor b=c$.
  For this, we create a hash table for $X$ with expected 
  construction time $\bigO(n)$ and constant lookup time~\cite{FKS84}.
  Since there are less than $4R^2$ pairs $(b, c)$,
  the expected time for this check is $\bigO(n + R^2) = \bigO(n + n^2(\log^2 w)/w^2)$.%
  \footnote{Note that it would not be possible to derive expected time $\bigO(R^2)$ for checking all pairs of bad elements
    if we did not start all over if the number of keys in bad buckets is at least $2R$.}
  
  \subsection{Short Words: Using Lookup Tables}
  \label{subsec:pcshort}
  
  For word lengths $w=\bigO((\log^2 n)\log\log n)$, we choose $R = \ceil{55\cdot n\cdot({\log{\log{n}}})/\log{n}}$ and $p = \floor{6\cdot\log{\log{n}}}$.
  Evaluating the two hash functions for all keys is done in expected time $\bigO(n(\log R+p)) = \bigO(n\log n)$.
  Then, we have $\bigO(n({\log{\log{n}}})/\log{n})$ good buckets of size $\bigO(\log{n}/\log{\log{n}})$ as well as $\bigO(n({\log{\log{n}}})/\log{n})$ bad elements.
  We are able to pack all fingerprints of elements in a good bucket into $\leq\delta\log n$ bits, for some constant $\delta\in(0, \frac13)$ in time $\bigO(R+n)=\bigO(n)$.
  Furthermore, the expected number of colliding triples (conditioned on $\card{\bad{X}} < 2R$) is bounded by
  $2n^{3} / (R\cdot 2^{p}) = \bigO(n^{2}/((\log{\log{n}})\log^5{n}))$.
  
  \subparagraph{Triples with No Bad Element}
  \label{subpar:pcshort:0bad}
  
  To find all triples of good elements,
  we use the lookup table strategy from~\cite{BDP08}.
  We consider all pairs of good buckets $\bucket{X}{u}, \bucket{X}{v}\subseteq \good{X}$, both of size $\le 3n/R$,
  so that our algorithm performs at most $R^{2} = \bigO(n^{2}\frac{(\log{\log{n}})^{2}}{\log^{2}{n}})$ rounds. 
  Given $u$ and $v$, only bucket $\bucket{X}{u\xor v}\subseteq\good{X}$ 
  can possibly contain a good $c$ with $a\xor b=c$ and $(a, b) \in \bucket{X}{u}\times \bucket{X}{v}$.
  Instead of searching for a triple $(a,b,c)$ with $a\xor b=c$ naively, 
  we use a lookup table indexed by three word-packed arrays $\wpa{X}{u}$, $\wpa{X}{v}$, $\wpa{X}{u\xor v}$ as a pre-stage.
  (This table has size $o(n)$ and can be built in time $o(n)$.)
  Only if this lookup yields a positive result, 
  we check in time $\bigO(({\log{n}}/\log{\log{n}})^{3})$ whether there is a non-colliding triple in the corresponding buckets.
  We stop as soon as a non-colliding triple is found. 
  Since the expected number of colliding triples is only $\bigO(n^{2}/((\log{\log{n}})\log^5{n}))$, 
  the overall time for all these checks is negligible in comparison to the claimed time bound.
  
  An entry of the lookup table is indexed by a triple $(\alpha, \beta, \gamma)$ of 
  word-packed arrays, each containing $3n/R$ many $p$-bit strings,
  and indicates (by one bit) if there are elements $\alpha_{i}, \beta_{j}, \gamma_{k}$ in these arrays 
  such that $\alpha_{i}\xor \beta_{j} = \gamma_{k}$.
  The number of entries is $2^{3\delta\log n} = n^{3\delta} = n^{1-\Omega(1)}$.
  One table entry can be computed in time $\bigO((({\log{n}})/\log{\log{n}})^{3})$,
  and so setting up the lookup table takes time $n^{1-\Omega(1)}$.
  
  Thus, the total time bound is $\bigO(R^2)$ in the worst case (for the rounds)
  plus $o(n)$ (for setting up the lookup table)
  plus expected time $\bigO(n^2/((\log\log n)^4\log^2 n)$ (for the extra work caused by colliding triples), altogether 
  $\bigO(R^2) = \bigO\bigl(n^{2}\frac{(\log{\log{n}})^{2}}{\log^{2}{n}}\bigr)$.
  
  \subparagraph{Triples with One Bad Element}
  \label{subpar:pcshort:1bad}
  
  \Wlog, we examine all triples $(a, b, c)\in\bad{X}\times\good{X}\times\good{X}$.
  In this case we employ lookup tables just as before,
  but only for pairs of good buckets.
  We treat each pair $(a,u)\in\bad{X} \times [R]$ separately,
  \ie{}, there are $\card{\bad{X}}\cdot R < 2R^{2} = \bigO\left(n^{2}\frac{(\log{\log{n}})^{2}}{\log^{2}{n}}\right)$ rounds.
  We need to look for non-colliding triples $(a, b, c)\in \{a\}\times \bucket{X}{u}\times \bucket{X}{h_1(a)\xor u}$ 
  with $a\xor b=c$, where $\bucket{X}{u}$ and $\bucket{X}{h_1(a)\xor u}$ are good.
  We use a lookup table to check in constant time whether $\wpa{X}{u}\xor (h_2(a), h_2(a), \dotsc, h_2(a))$ and $\wpa{X}{h_1(a)\xor u}$ contain a common element or not.
  If this lookup yields a positive result, 
  we check in time $\bigO((\log^{2}{n})/(\log{\log{n}})^{2})$ 
  whether there is a non-colliding  triple in the corresponding buckets or not.
  Once we have found such a triple, we stop. 
  The expected total number of colliding triples is 
  $\bigO(n^{2}/((\log{\log{n}})\log^5{n}))$,
  and hence the time spent for checking these is smaller than the claimed bound. 
  
  As before, the time for building the lookup table is $n^{1-\Omega(1)}$.  
  So, the total expected time for
  this case is $\bigO(R^{2}) = \bigO(n^{2}{(\log{\log{n}})^{2}}/{\log^{2}{n}})$.
  
  \subparagraph{Triples with at least Two Bad Elements}
  \label{subpar:pcshort:23bad}
  
  As in \cref{subpar:pclong:23bad}, we can use a hash table to handle this case in expected time $\bigO(n+R^2) = \bigO(n^{2}{(\log{\log{n}})^{2}}/{\log^{2}{n}})$.
  
  \bigskip
  
  Since all combinations of good and bad buckets
  give expected running times $\bigO(n^2(\log^3{w})/w)$ 
  and $\bigO(n^2(\log{\log{n}})^2/\log^2{n})$, respectively, 
  \Cref{thm:subquadraticRandomizedShortWords} is proved. \qed
  
  \ignore{\subsection{Total Expected Running Time}
    \label{subsec:subquadraticRandomizedShortWordsTotalExpectedTime}
    
    For parameters $R_1$, $p_1$, the \red{total expected running time} is
    \[ \bigO\left( \begin{gathered}n\log{n} + n^{2}\frac{\log^{3}{w}}{w} + n^2\frac{\log^2{w}}{w^2} \\+ \left(n+n^2\frac{\log^2{w}}{w^2}\right) + \left(n+n^2\frac{\log^4{w}}{w^2}\right)\end{gathered} \right) = \red{\bigO\left( n^{2}\frac{\log^{3}{w}}{w} \right)} \,\text. \]
    
    For parameters $R_2$, $p_2$, the \red{total expected running time} is
    \[ \bigO\left( \begin{gathered}n\log{n} + n^{2}\frac{(\log{\log{n}})^{2}}{\log^{2}{n}} + n^2\frac{(\log{\log{n}})^2}{\log^2{n}} \\+ n^2\frac{(\log{\log{n}})^2}{\log^2{n}} + n^2\frac{(\log{\log{n}})^2}{\log^2{n}}\end{gathered} \right) = \red{\bigO\left( n^2\frac{(\log{\log{n}})^2}{\log^2{n}} \right)} \,\text. \]
  }
  
  \ignore{\end{proof}}

\ignore{
  \subsection{The Algorithm for \threeXOR{} with Long Words}
  \label{subsec:subquadraticRandomizedLongWords}
  If the word length is $w=\Omega(n\log{n})$, we can combine randomized universe reduction, word packing and word level parallelism, \ie{} \cref{lem:subproblemsWordPackedArrays}, to compute an answer to a given \threeXOR{} instance in subquadratic time.
  This strategy will be extended in \cref{subsec:subquadraticRandomizedShortWords} (short words) 
  and is based on the strategy for \threeSUM{} with short words presented in \cite{BDP08}.
  
  \todo[inline]{after collapse: whole input fits in $O(1)$ words, stop here}
  
  \begin{theorem}
    \label{thm:subquadraticRandomizedLongWords}
    The \threeXOR{} problem can be solved in expected time 
    \[\bigO\left(n\log^{2}{n} + \log{w}\right)\] on a \wordRAM{} with word length $w=\Omega(n\log{n})$.
  \end{theorem}
  
  \todo[inline]{Aufr\"aumen}
  \begin{proof}
    \label{prf:thm:subquadraticRandomizedLongWords}
    We want to pack hash values (of length $p\leq w$ for some appropriate value of $p$) of all elements of $A$, $B$, $C$ into word-packed arrays $w^{A}$, $w^{B}$, $w^{C}$.
    For this purpose, we use an \emph{affine}\footnote{\ie{} for every $h\in\hashFamily$ there is a constant $c_{h}\in\{0, 1\}^{p}$ \suchthat{} $h(x\xor y)=h(x)\xor h(y)\xor c_{h}$ for all $x, y\in\{0, 1\}^{L}$} and \emph{pairwise independent}\footnote{\ie{} $\prob{h\in\hashFamily}{h(x)=u\wedge h(y)=v}=\frac{1}{{(2^p)}^{2}}$ for all $x, y\in\{0, 1\}^{L}, x\neq y$ and $u, v\in\{0, 1\}^{p}$} (and therefore uniform\footnote{\ie{} $\prob{h\in\hashFamily}{h(x)=u}=\frac{1}{2^{p}}$ for all $x\in\{0, 1\}^{L}$ and $u\in\{0, 1\}^{p}$} and $1$-universal\footnote{\ie{} $\prob{h\in\hashFamily}{h(x)=h(y)}\leq\frac{1}{2^{p}}$ for all $x, y\in\{0, 1\}^{L}, x\neq y$}) class of hash functions $\hashFamily\subseteq\condSet{h}{h\colon\{0, 1\}^{L}\to\{0, 1\}^{p}}$.
    All necessary hash values will be (pre)calculated only once (and stored for further use).
    
    We consider the following class of hash functions with these properties, \eg{} mentioned in \cite[sec. 1.1]{D96}:
    Let
    \[\hashFamily := \hashFamily_{L, p} := \condSet{h_{a, b}\colon\{0, 1\}^{L}\to\{0, 1\}^{p}}{a\in\{0, 1\}^{L+p-1}, b\in\{0, 1\}^{p}}\]
    where
    \begin{align*}
      h_{a, b}(x) := a\circ x\xor b =
      & \begin{pmatrix}
        a_{0} & a_{1} & \dots & a_{L-1}\\
        a_{1} & a_{2} & \dots & a_{L}\\
        \vdots & \vdots & \ddots & \vdots\\
        a_{p-1} & a_{p} & \dots & a_{L+p-2}
      \end{pmatrix} \cdot
      \begin{pmatrix}
        x_{0}\\
        \vdots\\
        x_{L-1}
      \end{pmatrix} \xor
      \begin{pmatrix}
        b_{0}\\
        \vdots\\
        b_{p-1}
      \end{pmatrix}
    \end{align*}
    for all $a\in\{0, 1\}^{L+p-1}$, $b\in\{0, 1\}^{p}$ and $x\in\{0, 1\}^{L}$.
    
    \todo[inline]{For $L\leq 6\log_{2}{n}$, \ie{} after collapse, we do not need to hash again. (Since this is exactly what the collapse did.)}
    
    Since we want to pack $n$ hash values into one $\Omega(n\log{n})$-bit word, including some additional space (see~\cref{subsec:subproblemsWordPackedArrays}),
    we choose $p=\lceil k\log{n}\rceil=\Theta(\log{n})$ for some constant $k>0$ and pick a hash function uniformly at random in \red{time $\bigO(1)$} (by determining the bit vectors $a$ and $b$ by two words).
    To calculate all $n$ hash values of bit length $p$\todo[inline]{using bit parallelism for computing the parities}, we need \red{time $\bigO(n\cdot p+\log{w})=\bigO(n\log{n}+\log{w})$}.
    Finally, we pack all hash values of elements of one set into one word.
    This can be done in \red{time $\bigO(n)$} (using some native constants).
    Let $w^{A}$, $w^{B}$, $w^{C}$ be the word-packed arrays.\footnote{Actually, we only need two of these sets.}
    
    Let $(a, b, c)\in A\times B\times C$ be a \emph{good triple}, \ie{} $a\xor b=c$ holds.
    Since $h$ is affine, we have $h(c)=h(a\xor b)=h(a)\xor h(b)\xor c_{h}$.
    Therefore, to check whether there is some good triple or not, we can do the following:
    For each $a\in A$,
    check whether there are some $b\in B$ and $c\in C$ \suchthat{} $h(c)=h(a)\xor h(b)\xor c_{h}$.
    If we have found such a pair $(b, c)$, we only have to check whether $(a, b, c)$ is good or not.
    A triple $(a, b, c)\in A\times B\times C$ with $h(a\xor b)=h(c)$, but $a\xor b\neq c$ is called a \emph{colliding triple}.
    
    Now, we want to give an upper bound on the expected number of colliding triples.
    Let $(a, b, c)\in A\times B\times C$ with $a\xor b\neq c$.
    Then
    \[\prob{h\in\hashFamily}{h(a\xor b)=h(c)} \leq \frac{1}{2^{p}} = \frac{1}{2^{\lceil k\log{n}\rceil}} \leq \frac{1}{n^{k}} \,,\]
    since $\hashFamily$ is $1$-universal.
    Hence, the expected number of colliding triples is
    \[\sum_{\substack{a\in A, b\in B, c\in C\\a\xor b\neq c}}{\prob{h\in\hashFamily}{(a, b, c)\text{ is bad}}} \leq \sum_{\substack{a\in A, b\in B, c\in C\\a\xor b\neq c}}{\frac{1}{2^{p}}} \leq \frac{n^{3}}{2^{p}} \leq n^{3-k} \,.\]
    
    To solve the \threeXOR{} problem on $A$, $B$, $C$,
    we do the following for each $a\in A$:
    At first, we $\XOR$ each element of $w^{B}$ with $h(a)\xor c_{h}$ (in time $\bigO(1)$ using native constants).
    Therefore, this new word-packed array consists of elements $h(a)\xor h(b)\xor c_{h}$ for all $b\in B$.
    Then, we compare the result to $w^{C}$ using \cref{lem:subproblemsWordPackedArrays}.
    If there are $t$ pairs of equal elements, \ie{} $t$ pairs $(b, c)$ with $h(c)=h(a\xor b)$, we can find them in time $\bigO(t+\log^{2}{n})$.
    Then we can check the $t$ pairs whether they yield a bad or a good triple in time $\bigO(t)$.
    
    Since we can stop after we have found a good triple
    and since there are only expected $\bigO\left(n^{3-k}\right)$ colliding triples,
    the \red{expected running time} for all $a\in A$ together is \red{$\bigO\left(n\cdot\log^{2}{n} + n^{3-k}\right) = \bigO(n\cdot\log^{2}{n})$} (if $k\geq 2$).
  \end{proof}
}


\newpage
\subsection{Pseudocode}
\label{subsec:pseudocode}

For the convenience of the reader, we append some pseudocode implementations of the randomized subquadratic time algorithm and the reductions to \offSetDisjointness{} and \offSetIntersection.

\begin{algorithm}
  \algo{\normalfont{3XOR}($X$)\tikzmark{3XOR}}{
    \Repeat{$\card{B} < 2R$}{
      \tcp{partition $X$ into buckets using $h₁$:}
      pick linear, $1$-universal $h₁: \{0,1\}^w → \{0,1\}^r$ with $2^r = R ≈ \ceil{6n (\log w)/ w}$\;
      $\bucket{X}{u}\gets \{x\in X\mid h_1(x)=u\}$ for $u\in\{0, 1\}^r$\;
      $B ← \{x \in X \mid \card{\bucket{X}{h(x)}} > 3 \frac{n}{R}\}$ \tcp{bad elements in overfull buckets}
    }
    \tcp{search for solution involving at least two bad elements:}
    \For(\tcp*[h]{$<4R^2$ choices}){$a, b \in B$}{
      \If(\tcp*[h]{$O(1)$ using appropriate hash table for $X$}){$a⊕b \in X$}{
        \Return $(a,b,a⊕b)$\;
      }
    }
    \tcp{search for solution involving at most one bad element:}
    $\bucket{X}{u}\gets \emptyset$ for $u\in\{0, 1\}^r$ with $\card{\bucket{X}{u}}>3\frac{n}{R}$ \tcp{empty the bad buckets}
    pick linear, $1$-universal $h₂: \{0,1\}^w → \{0,1\}^p$ with $p = \floor{2 \log w}$\;
    \For{$u \in \{0,1\}^r$}{
      \tcp{pack fingerprints of elements of $\bucket{X}{u}$ into one word $\wpa{X}{u}$}
      $\wpa{X}{u} ← h₂(\bucket{X}{u}) := \textrm{concatenate}\ \{h₂(x) \mid x \in \bucket{X}{u}\}$
    }
    \For(\tcp*[h]{$n·R$ iterations}){$a \in X$ and $u \in \{0,1\}^r$}{
      $\wpax{X}{u}{a} \gets \wpa{X}{u} ⊕ h₂(a)$ \tcp{$h₂(a)$ added to each fingerprint in $\wpa{X}{u}$}
      \For{$v \in \wpax{X}{u}{a} ∩ \wpa{X}{h_1(a)\xor u}$\tikzmark{b2top}}{
        identify responsible $b,c$, in particular with\tikzmark{b2right}\linebreak $v = h₂(a) ⊕ h₂(b) = h₂(c),\ \ h₁(b) = u$\;
        \If{$a ⊕ b = c$}{
          \Return $(a,b,c)$\;
        }\tikzmark{b2bot}
      }
    }
    \Return \emph{no solution}
    \AddNote{[yshift=0.5em]b2top}{b2bot}{[xshift=1em]b2right}{needs time $\bigO(\log²(n/R))$ plus size of intersection}%
    \begin{tikzpicture}[overlay, remember picture]%
      \node[] at ([xshift=5.5em,yshift=0.2\baselineskip]3XOR) {\tcp{$X\subseteq\{0, 1\}^w$, $\card{X}=n$}};%
    \end{tikzpicture}%
  }
  \caption{Our randomized subquadratic \threeXOR{} algorithm from \cref{sec:subquadraticRandomized} for the case $w = Ω((\log² n)\log\log n)$. For $w = o((\log² n)\log\log n)$ using lookup tables to search for solutions involving at most one bad element yields a faster algorithm.}
  \label{algo:subquadratic3XOR}
\end{algorithm}


\begin{algorithm}
  \reduction{\normalfont{3XOR-to-offlineSetDisjointness}($X, \gamma$)\tikzmark{3XORoffSetDisj}}{
    \tcp{partition $X$ into buckets using $h₁$:}
    pick linear, $1$-universal $h₁: \{0,1\}^w → \{0,1\}^r$ with $2^r = R ≈ \ceil{n^\gamma}$\;
    $\bucket{X}{u}\gets \{x\in X\mid h_1(x)=u\}$ for $u\in\{0, 1\}^r$\;
    $B ← \{x \in X \mid \card{\bucket{X}{h(x)}} > 3 \frac{n}{R}\}$ \tcp{bad elements in overfull buckets}
    
    \For(\tcp*[h]{expected $O(R)$ elements}){$b \in B$}{
      $X^{⊕b} ← \mathrm{sort} \{a ⊕ b \mid a \in X\}$\;
      \If{$∃ c \in X^{⊕b} ∩ X$}{
        \Return $(c\xor b,b,c)$
      }
    }
  
    \tcp{create shifted buckets using $h_{21}^{i}$, $h_{22}^{i}$:}
    pick linear, 1-universal $h_{21}^{i},h_{22}^{i} : \{0,1\}^w → \{0,1\}^p$ with $2^{2p} = P \approx \ceil{(5n/R)²} = O(n^{2-2γ})$ and $0\leq i<\ceil{\log{n}}$\;
    \For{$u \in \{0,1\}^r$, $v \in \{0,1\}^p$ and $0\leq i<\ceil{\log n}$}{
      $X_{u,v}^{↑, i} ← \{(h_{21}^{i}(a)⊕v,h_{22}^{i}(a)) \mid a \in X_u\}$\;
      $X_{u,v}^{↓, i} ← \{(h_{21}^{i}(a),h_{22}^{i}(a)⊕v) \mid a \in X_u\}$\;
    }
    
    \tcp{apply algorithm for \offSetDisjointness:}
    $(A,B,C,Q) ← ((X_{u,v}^{↑,i})_{u,v,i},(X_{u,v}^{↓,i})_{u,v,i}, \{0,1\}^{2p}, \emptyset)$\;
    \For{$c \in X$, $u \in \{0,1\}^r$ and $0\leq i<\ceil{\log n}$}{
      $q \gets (X_{u,h_{21}^{i}(c)}^{↑,i},X_{u⊕h₁(c),h_{22}^{i}(c)}^{↓,i})$, identified by $(c, u, i)$\;
      $Q ← Q ∪ \{q\}$
    }
    $Q' ←$ offlineSetDisjointness($A$,$B$,$C$,$Q$) \tcp{$Q'\subseteq Q$}
    
    \tcp{calculate solution for the \threeXOR{} instance:}
    \For{$c\in X$ and $u\in\{0, 1\}^r$}{
      \If{$(c, u, i)\in Q'$ for all $0\leq i<\ceil{\log n}$}{
        $X_u^{⊕c} ← \mathrm{sort} \{a\xor c \mid a \in X_u\}$\;
        \If{$∃b\in X_u^{⊕c} ∩ X_{h₁(c)⊕u}$}{
          \Return $(b\xor c,b,c)$
        }
      }
    }
    \Return \emph{no solution}
    \begin{tikzpicture}[overlay, remember picture]%
      \node[] at ([xshift=7.5em,yshift=0.2\baselineskip]3XORoffSetDisj) {\tcp{$X\subseteq\{0, 1\}^w$, $\card{X}=n$, $0<\gamma<1$}};%
    \end{tikzpicture}%
  }%
  \caption{Algorithm from \cref{subsec:reduction} reducing \threeXOR{} to \offSetDisjointness, establishing a conditional lower bound on the runtime of \offSetDisjointness.}%
  \label{algo:setDisjointness}
\end{algorithm}

\begin{algorithm}
  \reduction{\normalfont{3XOR-to-offlineSetIntersection}($X, \gamma, \delta$)\tikzmark{3XORoffSetInter}}{
    \tcp{$X\subseteq\{0, 1\}^w$, $\card{X}=n$, $0\leq\gamma<1$, $0<\delta<1+\gamma$}
    \tcp{try to guess a solution}
    \SetKwFor{RepeatTimes}{repeat}{times}{}
    \RepeatTimes{$\ceil{\delta n^{\delta}\ln{n}}$}{
      pick $a, b\in X$ independently at random\;
      \If{$a\xor b\in X$}{
        \Return $(a, b, a\xor b)$
      }
    }
    \tcp{partition $X$ into buckets using $h₁$:}
    pick linear, $1$-universal $h₁: \{0,1\}^w → \{0,1\}^r$ with $2^r = R ≈ \ceil{n^\gamma}$\;
    $\bucket{X}{u}\gets \{x\in X\mid h_1(x)=u\}$ for $u\in\{0, 1\}^r$\;
    $B ← \{x \in X \mid \card{\bucket{X}{h(x)}} > 3 \frac{n}{R}\}$ \tcp{bad elements in overfull buckets}
    
    \For(\tcp*[h]{expected $O(R)$ elements}){$b \in B$}{
      $X^{⊕b} ← \mathrm{sort} \{a ⊕ b \mid a \in X\}$\;
      \If{$∃ c \in X^{⊕b} ∩ X$}{
        \Return $(c\xor b,b,c)$
      }
    }
    
    \tcp{create shifted buckets using $h_{21}$, $h_{22}$:}
    pick linear, 1-universal $h_{21},h_{22} : \{0,1\}^w → \{0,1\}^p$ with $2^{2p} = P \approx \ceil{n^{1+\delta}/R} = O(n^{1+\delta-γ})$\;
    \For{$u \in \{0,1\}^r$ and $v \in \{0,1\}^p$}{
      $X_{u,v}^{↑} ← \{(h_{21}(a)⊕v,h_{22}(a)) \mid a \in X_u\}$\;
      $X_{u,v}^{↓} ← \{(h_{21}(a),h_{22}(a)⊕v) \mid a \in X_u\}$\;
      \tcp{for each $y$ in $X_{u,v}^{\uparrow}$ ($X_{u,v}^{\downarrow}$), also maintain a list of elements $a\in X_u$ that generate $y$}
    }
    
    \tcp{apply algorithm for \offSetIntersection:}
    $(A,B,C,Q) ← ((X_{u,v}^{↑})_{u,v},(X_{u,v}^{↓})_{u,v}, \{0,1\}^{2p}, \emptyset)$\;
    \For{$c \in X$ and $u \in \{0,1\}^r$}{
      $q \gets (X_{u,h_{21}(c)}^{↑},X_{u⊕h₁(c),h_{22}(c)}^{↓})$, identified by $(c, u)$\;
      $Q ← Q ∪ \{q\}$
    }
    $Q' ←$ offlineSetIntersection($A$,$B$,$C$,$Q$) \tcp{$Q'\colon Q\to\{C'\mid C'\subseteq C\}$}
    
    \tcp{calculate solution for the \threeXOR{} instance:}
    \For{$c\in X$ and $u\in\{0, 1\}^r$}{
      \For(\tcp*[h]{common element of two shifted buckets}){$y\in Q'((c, u))$}{
        \For{$(a, b)\in X_{u}\times X_{u\xor h_1(c)}$ generating $y\in X_{u,h_{21}(c)}^{↑}\cap X_{u⊕h₁(c),h_{22}(c)}^{↓}$}{
          \If{$a\xor b=c$}{
            \Return $(a,b,c)$
          }
        }
      }
    }
    \Return \emph{no solution}\;
  }
  \caption{Algorithm from \cref{subsec:reduction} reducing \threeXOR{} to \offSetIntersection, establishing a conditional lower bound on the runtime of \offSetIntersection.}
  \label{algo:setIntersection}
\end{algorithm}

\end{document}